\newcommand{\E}{\mathbb{E}}
\newcommand{\V}{\mathbb{V}}
\newcommand{\R}{\mathbb{R}}
\newcommand{\C}{\mathbb{C}}
\newcommand{\Z}{\mathbb{Z}}
\DeclareMathOperator*{\argmax}{arg\,max} 
\DeclareMathOperator*{\argmin}{arg\,min} 
\def\independenT#1#2{\mathrel{\setbox0\hbox{$#1#2$}%
\copy0\kern-\wd0\mkern4mu\box0}}
\newcommand{\vnorm}[1]{\lVert #1 \rVert}
\newcommand{\KL}[2]{\mathcal{D}_{KL}\left(#1 \mid \mid #2 \right)} 
\newcommand{\SymKLexp}[2]{\frac{\KL{p}{q} + \KL{q}{p}}{2}}
\newcommand{\transpose}[1]{#1^\top}
\newcommand{\interest}[1]{\iota \left(  #1 \right)}
\newcounter{bar}
\newcommand{\foo}{%
\stepcounter{bar}%
\thebar}
\newtheorem{theorem}{Theorem}[section]
\newtheorem{corollary}[theorem]{Corollary}
\newtheorem{definition}[theorem]{Definition}
\newtheorem{properties}[theorem]{Properties}
\newtheorem{proposition}[theorem]{Proposition}
\newtheorem{remark}[theorem]{Remark}
\newcommand{\Title}{Forecastable Component Analysis (ForeCA)\footnote{Appears in Proceedings of ICML 2013.}}
\newcommand{\Author}{Georg M.\ Goerg \\ \url{gmg@stat.cmu.edu}, \url{www.stat.cmu.edu/\textasciitilde gmg}}
\newcommand{\Affil}{Carnegie Mellon University, Department of Statistics}
\newcommand{\City}{Pittsburgh, PA 15213, USA}
\title{\Title}
\author{\Author \\ \Affil \\ \City}
\date{\today}
\begin{document}

\pagenumbering{arabic}

\maketitle

\begin{abstract}
I introduce \textbf{Fore}castable \textbf{C}omponent \textbf{A}nalysis (ForeCA), a novel dimension reduction technique for temporally dependent signals.  Based on a new \emph{forecastability} measure, ForeCA finds an optimal transformation to separate a multivariate time series into a forecastable and an orthogonal white noise space.  I present a converging algorithm with a fast eigenvector solution. Applications to financial and macro-economic time series show that ForeCA can successfully discover informative structure, which can be used for forecasting as well as classification.

The R package \href{cran.r-project.org/web/packages/ForeCA}{ForeCA} accompanies this work and is publicly available on \href{cran.r-project.org}{CRAN}.
\end{abstract}

\section{Introduction}
With the rise of high-dimensional datasets it has become important to perform dimension reduction (DR) to a lower dimensional representation of the data.  For simplicity we consider linear transformations $\mathbf{W} \in \R^{k \times n}$, which map an $n$-dimensional $\mathbf{X}$ to a $k \leq n$ dimensional $\mathbf{S} = \mathbf{W} \mathbf{X}$.   Typically, the transformed data should be somewhat ``interesting''; there is no point in transforming $\mathbf{X}$ to an arbitrary $\mathbf{S}$ that is less useful, meaningful, etc. Let $\interest{\mathbf{S}}$ measure ``interestingness'' of $\mathbf{S}$. DR can then be set up as an optimization problem
\begin{align}
\label{eq:max_problem}
\widehat{\mathbf{w}}_j &= \argmax_{\mathbf{w} \in \R^{n \times 1}} \interest{ \transpose{\mathbf{w}} \mathbf{X}}, \quad j = 1, \ldots, k, \\
\label{eq:orthogonal_signal_j}
\text {subject to }& \transpose{\mathbf{w}_j} \mathbf{X} \, \bot \,  \lbrace \transpose{\mathbf{w}_1} \mathbf{X}, \ldots, \transpose{\mathbf{w}_{j-1}} \mathbf{X} \rbrace,
\end{align}
where \eqref{eq:orthogonal_signal_j} is a common DR constraint, which makes $\mathbf{S}_j = \transpose{\mathbf{w}}_j \mathbf{X}$ orthogonal (uncorrelated) to previously obtained signals.

For example, principal component analysis (PCA) keeps large variance signals \citep{Jolliffe02_PCAbook} -- $\interest{X} = \E (X - \E X)^2$ in  \eqref{eq:max_problem}; independent component analysis (ICA) recovers statistically independent signals \citep{Hyvarinen00_ICAtutorial}; slow feature analysis (SFA) \citep{Wiskott02_SFA} finds ``slow'' signals and is equivalent to maximizing the lag $1$ autocorrelation coefficient.

DR techniques are often applied to multivariate time series $\mathbf{X}_t$, hoping that forecasting on the lower-dimensional space $\mathbf{S}_t$ is more accurate, simpler, more efficient, etc.  Standard DR techniques such as PCA or ICA, however, do not explicitly address forecastability of the sources. For example, just because a signal has high variance does not mean it is easy to forecast. 

\begin{figure*}[!t]
  \begin{center}
\makebox{\includegraphics[width=0.85\textwidth, trim = 0cm 0cm 0cm 0cm, clip = true]{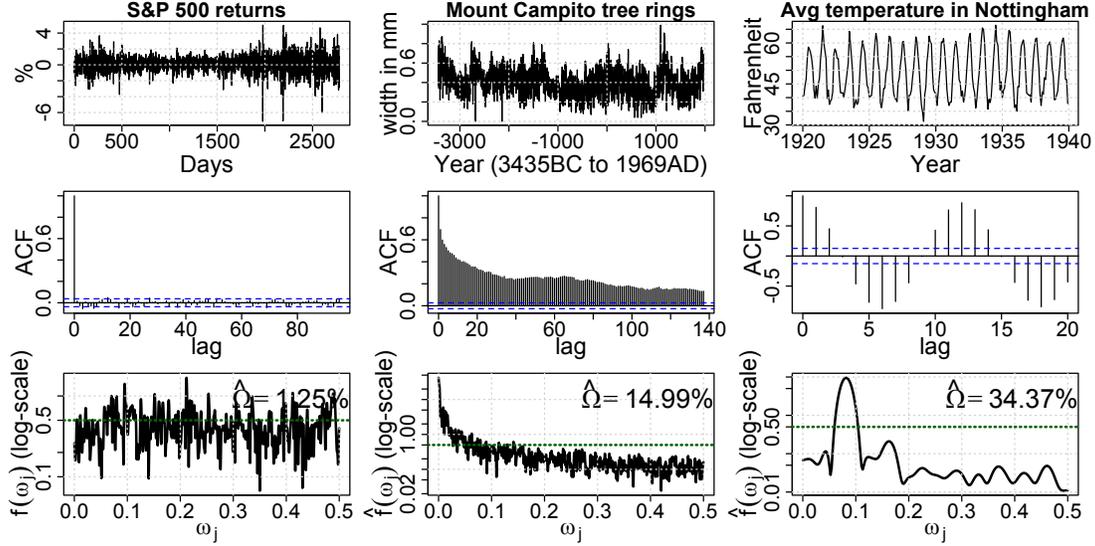}}
  \end{center}
\caption{\label{fig:real_world_examples} Observations (top); sample ACF $\widehat{\rho}(k)$ (middle); smoothed WOSA spectral density estimate (bottom). From left to right: i) S\&P 500 daily returns; ii) Mount Campito tree ring series; iii) monthly mean temperatures in Nottingham. Data publicly available in R packages: \texttt{SP500} in \texttt{MASS}; \texttt{camp} in \texttt{tseries}; \texttt{nottem} in \texttt{datasets}.}
\end{figure*}

Thus let's define interesting as being predictable. Forecasting is not only good for its own sake (finance, economics), but even when future values are not immediately interesting, signals that do have predictive power exhibit non-trivial structure by definition -- and are thus easier to interpret. 
For example, the time series in Fig.\ \ref{fig:real_world_examples} are ordered from least (S\&P500 daily returns) to most forecastable (monthly temperature in Nottingham) according to the ForeCA forecastability measure $\Omega(x_t)$ I propose in Definition \ref{def:Omega} below. And indeed moving from left to right they exhibit more structure.

The main contributions of this work are
\begin{inparaenum}[i)]
\item a model-free, comparable measure of forecastability for (stationary) time series (Section \ref{sec:measuring_forecastability}),
\item a novel data-driven DR technique, ForeCA, that finds forecastable signals,
\item an iterative algorithm that provably converges to (local) optima using fast eigenvector solutions (Section \ref{sec:maximize_Omega}), and 
\item applications showing that ForeCA outperforms traditional DR techniques in finding low-dimensional, forecastable subspaces, and that it can also be used for time series classification (Section \ref{sec:applications}).
\end{inparaenum}
Related work will be reviewed in Section \ref{sec:related_work}.
 
All computations and simulations were done in R \citep{R10}.

\section{Time Series Preliminaries}
\label{sec:TSA}
Let $y_t$ be a univariate, second-order stationary time series with mean $\E y_t = \mu_y < \infty$, variance $\V y_t = \sigma_y^2$, and autocovariance function (ACVF)
\begin{equation}
\label{eq:ACVF_univariate}
\gamma_y(k) = \E (y_t - \mu_y) \left( y_{t-k} - \mu_y \right), \quad k \in \Z.
\end{equation}
The ACVF for univariate processes is symmetric in $k$, $\gamma_y(k) = \gamma_y(-k)$. Let $\rho(k) = \gamma(k) / \gamma(0)$ be the autocorrelation function (ACF). A large $\rho(k)$ means that the process $k$ time steps ago is highly correlated with the present $y_t$. The sample ACFs $\widehat{\rho}(k)$ in Fig.\ \ref{fig:real_world_examples} show that, e.g., S\&P 500 daily returns are uncorrelated with their own past (stock market efficiency); yearly tree ring growth is highly correlated over time with significant lags even for $k \geq 100$ years; and intuitively temperature in month $t$ is  highly correlated with the temperature $k = 6$ (cold $\leftrightarrow$ warm) and $k = 12$ (cold $\rightarrow$ cold; warm $\rightarrow$ warm) months ago (or in the future).

The building block of time series models is \emph{white noise} $\varepsilon_t$, which has zero mean, finite variance, and is uncorrelated over time: $\varepsilon_t \sim WN(0, \sigma_{\varepsilon}^2)$ iff\footnote{\emph{Iff} will be used as an abbreviation for \emph{if and only if}.} 
\begin{inparaenum}[i)]
\item $\E \varepsilon_t = 0$,
\item  $\V \varepsilon_t = \gamma_{\varepsilon}(0) = \sigma_{\varepsilon}^2$, and
\item $\gamma_{\varepsilon}(k) = 0$ if $k \neq 0$.
\end{inparaenum}
Only if $\varepsilon_t$ is a Gaussian process, then it is also independent. 

For multivariate second-order stationary $\mathbf{X}_t$ with mean\footnote{Without loss of generality (WLOG) assume $\boldsymbol \mu = \boldsymbol 0$.} $\boldsymbol \mu \in \R^{n}$ and covariance matrix $\Sigma_{\mathbf{X}}$ the ACVF
\begin{equation}
\label{eq:ACVF_X_t}
\R^{n \times n} \ni \boldsymbol \Gamma_{\mathbf{X}}(k) = \E \left( \mathbf{X}_t - \boldsymbol \mu \right) \transpose{\left( \mathbf{X}_{t-k} - \boldsymbol \mu \right)},
\end{equation}
is a matrix-valued function of $k \in \Z$. In particular, $\boldsymbol \Gamma_{\mathbf{X}}(0) = \Sigma_{\mathbf{X}}$. The diagonal of $\boldsymbol \Gamma_{\mathbf{X}}(k)$ contains the ACVF of each $\mathbf{X}_{i}(t)$; the off-diagonal element $\boldsymbol \Gamma_{\mathbf{X}}(k)_{(i,j)}$ is the cross-covariance between the $i$th and $j$th series at \text{lag $k$:}
\begin{equation}
\label{eq:ACVF_X_t_ij}
\gamma_{ij}(k) =  \E \left( \mathbf{X}_{i,t} - \mu_i \right) \left( \mathbf{X}_{j,t-k} - \mu_j \right) \in \R.
\end{equation}
 Contrary to $\gamma_y(k)$, $\boldsymbol \Gamma_{\mathbf{X}}(k)$ is not symmetric, but
\begin{equation}
\label{eq:ACVF_transpose_symmetric}
\boldsymbol \Gamma_{\mathbf{X}}(k) = \transpose{\boldsymbol \Gamma_{\mathbf{X}}(-k)}.
\end{equation}

\subsection{Spectrum and Spectral Density}
The spectrum of a univariate stationary process can be defined as the Fourier transform of its ACVF,
\begin{align}
\label{eq:spectrum_univariate}
S_y(\lambda) = \frac{1}{2 \pi} \sum_{j = -\infty}^{\infty} \gamma_y(j) e^{i j \lambda}, \quad \lambda \in [-\pi, \pi],
\end{align}
where $i = \sqrt{-1}$ is the imaginary unit. Since $\gamma_y(k)$ is symmetric, the spectrum is a real-valued, non-negative function, $S_y: [-\pi, \pi] \rightarrow \R^{+}$. For white noise $\varepsilon_t$ all $\gamma_{\varepsilon}(k) = 0$ if $k \neq 0$, thus $S_{\varepsilon}(\lambda) = \frac{\sigma_{\varepsilon}^2}{2 \pi}$ is constant for all $\lambda \in [-\pi, \pi]$. When $\gamma(k) > 0$ for $k \neq 0$ the spectrum has peaks at the corresponding frequencies. For example, the spectral density of monthly temperature series (right in Fig.\ \ref{fig:real_world_examples}) has large peaks at $\lambda \approx \pi /6$ and $\pi/12$, which represent the half- and one-year cycle.\footnote{Frequencies $\lambda$ are often scaled by $\pi$, $\tilde{\lambda} = \lambda / \pi$. This does not change results qualitatively, but simplifies interpretation since the corresponding cycle length equals $\tilde{\lambda}^{-1}$.}

Vice versa, the ACVF can be recovered from the spectrum using the inverse Fourier transform
\begin{align}
\label{eq:inverse_Fourier}
\gamma_y(k) = \int_{-\pi}^{\pi} S_y(\lambda) e^{-i k \lambda} d \lambda, \quad k \in \Z.
\end{align}
In particular, $\int_{-\pi}^{\pi} S_y(\lambda) d \lambda = \sigma^2_y$ for $k = 0$. Let
\begin{equation}
\label{eq:spectral_density}
f_y(\lambda) = \frac{S_y(\lambda)}{\sigma_y^2} = \frac{1}{2 \pi} \sum_{j = -\infty}^{\infty} \rho_y(j) e^{i j \lambda},
\end{equation}
be the \emph{spectral density} of $y_t$. As $f_y(\lambda) \geq 0$ and $\int_{-\pi}^{\pi} f_y(\lambda) d \lambda = 1$, the spectral density can be interpreted as a probability density function (pdf) of an (unobserved) random variable (RV) $\Lambda$ that ``lives'' on the unit circle. For white noise $f_{\varepsilon}(\lambda) = \frac{1}{2 \pi}$, which represents the uniform distribution $U(-\pi, \pi)$.

\begin{remark}[Spectrum and spectral density]
In the time series literature ``spectrum'' and ``spectral density'' are often used interchangeably.  Here I reserve ``spectral \emph{density}'' for $f_y(\lambda)$ in \eqref{eq:spectral_density}, as it integrates to one such as standard probability \emph{density} functions.
\end{remark}

\section{Measuring Forecastability}
\label{sec:measuring_forecastability}
Forecasting is inherently tied to the time domain. Yet, since Eqs.\ \eqref{eq:spectrum_univariate} \& \eqref{eq:inverse_Fourier} provide a one-to-one mapping between the time and frequency domain, we can use frequency domain properties to measure forecastability.

The intuition for the proposed measure of forecastability is as follows. Consider 
\begin{align}
\label{eq:motivation_process}
\begin{split}
y_t &= \sqrt{2} \cos \left( 2\pi \mathcal{Y} t + \theta \right), \\
\theta & \sim U(-\pi, \pi), \quad \mathcal{Y} \sim p_y(y) \text { independent of } \theta. 
\end{split}
\end{align}
One can show that $S_y(\lambda) = p_y(\lambda)$ \citep{Gibson94_spectralentropy_interpretation}.

If we have to predict the future of $y_t$, then uncertainty about $y_{t+h}$, $h > 0$, is only manifested in uncertainty about $\mathcal{Y}$, since $\cos \left( 2\pi \mathcal{Y} t + \theta \right)$ is a deterministic function of $t$: less uncertainty about $\mathcal{Y}$ means less uncertainty about $y_{t+h}$.  We can measure this uncertainty using the Shannon entropy of $p_y(y)$ \citep{Shannon48}. It is thus natural to measure uncertainty about the future as (differential) entropy of $f_y(\lambda)$,
\begin{equation}
\label{eq:spectral_entropy}
H_{s,a}(y_t) := - \int_{-\pi}^{\pi} f_y(\lambda) \log_a f_y(\lambda) d\lambda,
\end{equation}
where $a > 0$ is the logarithm base.

On a finite support $[b, c]$ the maximum entropy occurs for the uniform distribution $U(b, c)$; thus a flat spectrum should indicate the least predictable sequence. And indeed, a flat spectrum corresponds to white noise, which is unpredictable by definition (using linear predictors). Consequently, for any stationary $y_t$
\begin{align*}
H_{s,a}(y_t) & \leq H_{s,a}(\text{white noise}) \\
&= - \int_{-\pi}^{\pi} \frac{1}{2 \pi} \log_a \frac{1}{2 \pi} d\lambda = \log_a 2 \pi,
\end{align*}
with equality iff $y_t$ is white noise.

\begin{definition}[Forecastability of a stationary process]
\label{def:Omega}
For a second-order stationary process $y_t$, let
\begin{align}
\label{eq:forecastability}
\begin{split}
\Omega & : y_t \mapsto [0, \infty], \\
 \Omega(y_t) & = 1 - \frac{H_{s,a}(y_t)}{\log_a \left( 2 \pi \right)} = 1 - H_{s,2 \pi}(y_t),
\end{split}
\end{align}
be the \emph{forecastability} of $y_t$.
\end{definition}

Contrary to other measures in the signal processing and time series literature, $\Omega(y_t)$ does not require actual forecasts, but is a characteristic of the process $y_t$. It is therefore not biased to a particular -- perhaps sub-optimal -- model, forecast horizon, or loss function; as used in e.g., \citet{Stone01_BSS_TemporalPredictability, BoxTiao77_CCAtimeseries}.

\begin{properties}
\label{prop:Omega}
 $\Omega(y_t)$ satisfies:
\begin{enumerate}[a)]
\item $\Omega(y_t) = 0$ iff $y_t$ is white noise.
\item \label{item:shift_scale_invariant} invariant to scaling and shifting:
\begin{equation*}
\Omega(a y_t + b) = \Omega(y_t) \text{ for } a,b \in \R, a \neq 0.
\end{equation*}
\item \label{item:Omega_comb_less_max} max sub-additivity for uncorrelated processes: 
\begin{align}
\label{eq:Omega_comb_less_max}
\Omega(\alpha x_t + \sqrt{1-\alpha^2} y_t) \leq  \max \lbrace \Omega(x_t), \Omega(y_t) \rbrace,
\end{align}
if $\E x_t y_s = 0 \text { for all } s, t \in \Z$; equality iff $\alpha \in \lbrace 0, 1 \rbrace$.
\end{enumerate}
\end{properties}

The three series in Fig.\ \ref{fig:real_world_examples} are ordered (left to right) by increasing forecastability and indeed larger $\widehat{\Omega}$ correspond to intuitively more predictable real-world events: stock returns are in general not predictable; average monthly temperature is. 

We can thus use \eqref{eq:forecastability} to guide the search for optimal $\mathbf{w}$ that make $y_t = \transpose{\mathbf{w}} \mathbf{X}_t$ as forecastable as possible.

\subsection{Plug-in Estimator for $\Omega$}
To estimate $\Omega(y_t)$, we first estimate $S_y(\lambda)$, normalize it, and then plug it in \eqref{eq:spectral_entropy}. 

An unbiased estimator of $S_y(\lambda)$ is the periodogram
\begin{equation}
\label{eq:periodogram}
I_{T, \mathbf{y}_1^T}(\omega_j) = \Big| \frac{1}{\sqrt{T}} \sum_{t=0}^{T-1} y_t e^{-2 \pi i \omega_j t}\Big|^2,
\end{equation}
where $\omega_j = j/T$, $j=0,1,\ldots, T-1$ are the (scaled) Fourier frequencies, and $\mathbf{y}_1^T = \lbrace y_1, \ldots, y_T \rbrace$ is a sample of $y_t$ . It is well known that \eqref{eq:periodogram} is not a good estimate (e.g., periodograms are not consistent).  In the numerical examples we therefore use weighted overlapping segment averaging (WOSA) \citep{NuttalCarter82_WOSA} $\widehat{S}_y(\omega_j)$ 
from the R package \texttt{sapa}: \texttt{SDF(y, ''wosa'')}.

The bottom row of Figure \ref{fig:real_world_examples} shows the normalized $\widehat{f}_{j,y} = \frac{\widehat{S}_y(\omega_j)}{\sum_{j=0}^{T-1} \widehat{S}_y(\omega_j)}$ along with the plug-in estimate
\begin{equation}
\label{eq:estimate_Omega}
\widehat{\Omega}(\mathbf{y}_1^T) = 1 + \sum_{j = 0}^{T-1} \widehat{f}_{j, y} \cdot \log_{a = T} \left( \widehat{f}_{j, y} \right) .
\end{equation}

\begin{remark}
Typically, to estimate $\E g(X)$ for $X \sim p(x)$ (here: $g(X) = \log p(X)$) the sample average is solely over $g(x_j)$ without multiplicative $p(x_j)$ terms. This however assumes that each $x_j$ is sampled from $p(x)$ (and thus $\frac{1}{n} \sum_{i=1}^{n} g(x_i) \rightarrow \E_p g(X) = \int g(x) p(x) dx$ by the strong law of large numbers). While this is true in a standard sampling framework, here the ``data'' are the Fourier frequencies $\omega_j$ and the fast Fourier transform (FFT) samples them uniformly (and deterministically) from $[-\pi, \pi]$ and not according to the ``true'' spectral density $f(\lambda)$.\footnote{Advances in ``compressed sensing'' \citep{Jacques_Vandergheynst_2010_CompressedSensing} might improve estimates; see also ``non-uniform FFT'' \citep{Fessler03nonuniformfast}.}
\end{remark}

Eq.\ \eqref{eq:estimate_Omega} can be improved by a better spectral density \citep{Fryzlewiczetal08_WaveletSpectrumEstimation, Trobs_Heinzel_2006, Lees_Park_1995} and entropy estimation \citep{Paninski03_EntropyEstimation}. Future research can also address direct estimation of \eqref{eq:spectral_entropy} -- as is common for classic entropy estimates \citep{SricharanRaichHero11_knn_entropy, Mark09_Fastmultidimensionalentropy}. However, since neither spectrum nor entropy estimation are the primary focus of this work, we use standard estimators for $S_y(\lambda)$ and then the plug-in estimator of \eqref{eq:estimate_Omega}. 

It must be noted though that $\widehat{\Omega}(\mathbf{y}_1^T)$ in \eqref{eq:estimate_Omega} is based on discrete rather than differential entropy.  It still has the intuitive property that white noise has zero estimated forecastability, but now $\widehat{\Omega}(\mathbf{y}_1^T) \in [0, 1]$; $\widehat{\Omega}(\mathbf{y}_1^T) = 1$ iff the sample is a perfect sinusoid. Applications show that \eqref{eq:estimate_Omega} yields reasonable estimates and we do not expect the results to change qualitatively for other estimators.  We leave differential entropy estimates of $\Omega$ to future work.

Notice that $\Omega(y_t)$ relies on Gaussianity as only then $f_y(\lambda)$ captures all the temporal dependence structure of $y_t$. While time series are often non-Gaussian, $\Omega(\cdot)$ is a computationally and algebraically manageable forecastability measure  -- similarly to the importance of variance in PCA for iid data, even though they are rarely Gaussian.

\section{ForeCA: Maximizing Forecastability}
\label{sec:maximize_Omega}

Recall from Eq.\ \eqref{eq:max_problem} that we want to find a linear combination of a multivariate $\mathbf{X}_t$ that makes $y_t = \transpose{\mathbf{w}} \mathbf{X}_t$ as forecastable as possible.  Based on the forecastability measure in Section \ref{sec:measuring_forecastability}, we can now formally define the ForeCA optimization problem:
\begin{align}
\label{eq:maximize_Omega}
\max_{\mathbf{w}} \Omega(\transpose{\mathbf{w}} \mathbf{X}_t) & = \max_{\mathbf{w}} \left( 1 + \frac{ \int_{-\pi}^{\pi} f_{y}(\lambda) \log_a f_y(\lambda) d \lambda }{\log_a \left( 2 \pi \right)} \right),\\
\label{eq:maximize_Omega_unit_variance_restriction}
\text{subject to } & \transpose{\mathbf{w}} \Sigma_{\mathbf{X}} \mathbf{w} = 1,
\end{align}
where \eqref{eq:maximize_Omega_unit_variance_restriction} must hold since \eqref{eq:spectral_entropy} uses the spectral \emph{density} of $y_t$, i.e.\ we need $\V y_t = \transpose{\mathbf{w}} \Sigma_{\mathbf{X}} \mathbf{w} = 1$.

Property \ref{prop:Omega}\ref{item:Omega_comb_less_max} seems to let \eqref{eq:maximize_Omega} only have a trivial boundary solution. However, it is intuitively clear that combining uncorrelated series makes forecasting (in general) more difficult, e.g., signal $+$ noise. But if $\E x_t y_s \neq 0$ for some $s,t \in \Z$ then combining them can make it simpler: for some $\alpha \in (0,1)$ it holds $\Omega(\alpha x_t + \sqrt{1-\alpha^2} y_t) > \max \lbrace \Omega(x_t), \Omega(y_t) \rbrace $.

To optimize the right hand side of \eqref{eq:maximize_Omega} we need to evaluate $f_y(\lambda) = f_{\transpose{\mathbf{w}} \mathbf{X}_t}(\lambda)$ for various $\mathbf{w}$ and do this efficiently.  We now show how to obtain $f_y(\lambda)$ by simple matrix-vector multiplication from $f_{\mathbf{X}}(\lambda)$.

\subsection{Spectrum of Multivariate Time Series and Their Linear Combinations}

For multivariate $\mathbf{X}_t$ the spectrum equals 
\begin{equation}
\label{eq:spectrum_multivariate}
S_{\mathbf{X}}(\lambda) = \frac{1}{2 \pi} \sum_{k = -\infty}^{\infty} \boldsymbol \Gamma_{\mathbf{X}}(k) e^{2 \pi i k \lambda}, \quad \lambda \in [-\pi, \pi].
\end{equation}
Contrary to the univariate case, \eqref{eq:spectrum_multivariate} is in general complex-valued. Yet, since $\boldsymbol \Gamma_{\mathbf{X}}(k) = \transpose{\boldsymbol \Gamma_{\mathbf{X}}(-k)}$, $S_{\mathbf{X}}(\lambda) \in \C^{n \times n}$ is Hermitian for every $\lambda$, $S_{\mathbf{X}}(\lambda) = \overline{\transpose{S_{\mathbf{X}}(\lambda)}}$, where $\overline{z} = a - i b$ is the complex conjugate of $z = a+ ib \in \C$ \citep[][p.\ 436]{BrockwellDavis91}.

For dimension reduction we consider linear combinations $y_t = \transpose{\mathbf{w}} \mathbf{X}_t$, $\mathbf{w} \in \R^{n}$. By assumption $\E y_t = \transpose{\mathbf{w}} \E \mathbf{X}_t = 0$ and $\gamma_y(k) = \E y_t y_{t-k} = \transpose{\mathbf{w}} \boldsymbol \Gamma_{\mathbf{X}}(k) \mathbf{w}$. In particular, $\gamma_y(0) = \sigma_y^2 = \transpose{\mathbf{w}} \Sigma_{\mathbf{X}} \mathbf{w}$. The spectrum of $\transpose{\mathbf{w}} \mathbf{X}_t$ can be quickly computed via $S_y(\lambda) = \transpose{\mathbf{w}} S_{\mathbf{X}}(\lambda) \mathbf{w} $ and consequently
\begin{equation}
\label{eq:linear_combination_spectral_density}
f_y(\lambda) = \frac{\transpose{\mathbf{w}} S_{\mathbf{X}}(\lambda) \mathbf{w}}{\transpose{\mathbf{w}} \Sigma_{\mathbf{X}} \mathbf{w}}, \quad \lambda \in [-\pi, \pi].
\end{equation}

Since $f_y(\lambda) \geq 0$ for every $y_t$, $\transpose{\mathbf{w}} S_{\mathbf{X}}(\lambda) \mathbf{w} \geq 0$ for all $\mathbf{w} \in \R^{n}$; thus $S_{\mathbf{X}}(\lambda)$ is positive semi-definite.

\subsection{Solving the Optimization Problem}
\label{sec:solving_optim_problem}
Since $\Omega$ is invariant to shift and scale (Property \ref{prop:Omega}\ref{item:shift_scale_invariant}), we shall not only assume zero mean, but also contemporaneously uncorrelated observed signals with unit variance in each component. WLOG consider $\mathbf{U}_t = \Sigma_{\mathbf{X}}^{-1/2} \mathbf{X}_t$; thus $\E \mathbf{U}_t \transpose{\mathbf{U}_t} = \mathbf{I}_n$. Given $\widehat{\mathbf{W}}_U$ for $\mathbf{U}_t$, the transformation for $\mathbf{X}_t$ becomes $\widehat{\mathbf{W}}_X = \widehat{\mathbf{W}}_U \widehat{\Sigma}_X^{-1/2}$. Problem \eqref{eq:maximize_Omega} is then equivalent to
\begin{align}
\label{eq:minimize_neg_entropy_2}
\mathbf{w}^* &= \argmin_{\mathbf{w}, \vnorm{\mathbf{w}}_2 = 1} h(\mathbf{w}) 
\end{align}
where
\begin{equation}
\label{eq:h_w}
h(\mathbf{w}) = - \int_{-\pi}^{\pi} \transpose{\mathbf{w}} S_U(\lambda) \mathbf{w} \cdot \ell \left( \mathbf{w}; \lambda \right)  d \lambda,
\end{equation}
is the spectral entropy (Eq.\ \eqref{eq:spectral_entropy}) of $\transpose{\mathbf{w}} \mathbf{X}_t$ as a function of $\mathbf{w}$. We use $\ell\left(\mathbf{w}; \lambda \right) := \log \transpose{\mathbf{w}} S_U(\lambda) \mathbf{w} = \log f_{ \transpose{\mathbf{w}} \mathbf{U}}(\lambda)$ for better readability.

In practice we approximate \eqref{eq:h_w} with $\widehat{S}_U(\omega_j) \in \C^{n \times n}$ 
and thus obtain\footnote{We use \texttt{``wosa''} estimates (\texttt{sapa} R package).  However, any other estimate of $S_U(\lambda)$ can be used.}
\begin{align}
\label{eq:maximize_entropy_discrete}
\mathbf{w}^* &=\argmin_{\mathbf{w}, \vnorm{\mathbf{w}}_2 = 1} \widehat{h}_T(\mathbf{w}).
\end{align}
Here
\begin{equation}
\label{eq:h_w_T}
 \widehat{h}_T(\mathbf{w}) =  - \frac{1}{T} \sum_{j=1}^{T-1} \left( \transpose{\mathbf{w}} \widehat{S}_U(\omega_j) \mathbf{w} \right) \cdot \widehat{\ell} \left(\mathbf{w}; \omega_j \right)
\end{equation}
is the discretized version of \eqref{eq:minimize_neg_entropy_2}, where $\widehat{\ell} \left(\mathbf{w}; \omega_j \right) = \log \transpose{\mathbf{w}} \widehat{S}_U(\omega_j) \mathbf{w}$.  Notice that $\widehat{S}_U(\omega_j) \in \C^{n \times n}$ varies with $\omega_j$ while $\mathbf{w} \in \R^{n}$ is fixed over all frequencies, which makes it difficult to obtain an analytic, closed-form solution. However, \eqref{eq:maximize_entropy_discrete} can be solved iteratively borrowing ideas from the expectation maximization (EM) algorithm \citep{DempsterLairdRubin77_EMalgorithm}.

\subsubsection{A Convergent EM-like Algorithm}
\label{sec:EM}
For every $\mathbf{w} \in \R^{n}$, $\vnorm{\mathbf{w}}_2=1$, $h(\mathbf{w})$ has the form of a mixture model with weights $\widehat{\pi}(j \mid {\mathbf{w}}) := \transpose{\mathbf{w}} \widehat{S}_U(\omega_j) \mathbf{w} \geq 0$ and ``log-likelihood'' $\widehat{\ell} \left(\mathbf{w}; \omega_j \right)$. Since $\int_{-\pi}^{\pi}  f_{ \transpose{\mathbf{w}} \mathbf{U}}(\lambda) d \lambda = 1$, $\widehat{\pi}(j \mid {\mathbf{w}})$ is indeed a discrete probability distribution over $\lbrace \omega_j \mid 0=1, \ldots, T-1 \rbrace$.

Just as in an EM algorithm, the objective $h(\mathbf{w})$ can be optimized iteratively by first fixing $\mathbf{w} \leftarrow \mathbf{w}^{(i)}$ in $\widehat{\ell}(\mathbf{w}; \omega_j)$, and then minimizing the quadratic form
\begin{align}
\label{eq:update_w_last}
\mathbf{w}_{i+1} & =  \argmin_{\mathbf{w}, \vnorm{\mathbf{w}}_2 = 1}  \transpose{\mathbf{w}} \overline{\widehat{S}^{(i)}_U} \mathbf{w},
\end{align}
where $ \overline{\widehat{S}^{(i)}_U} = - \frac{1}{T}  \sum_{j=0}^{T-1} \widehat{S}_U(\omega_j) \cdot \ell(\mathbf{w}_i; \omega_j) $.

\begin{proposition}
\label{prop:semi_def}
 $\overline{\widehat{S}^{(i)}_U} $ is positive semi-definite. 
\end{proposition}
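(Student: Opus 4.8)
The plan is to show directly that the real quadratic form $\transpose{\mathbf{v}}\,\overline{\widehat{S}^{(i)}_U}\,\mathbf{v}$ is non-negative for every $\mathbf{v}\in\R^{n}$, by splitting it over the Fourier frequencies $\omega_j$ and checking that each summand is a product of two non-negative factors.

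First I would record two facts. (a) Each estimated spectrum matrix $\widehat{S}_U(\omega_j)\in\C^{n\times n}$ is Hermitian positive semi-definite: the WOSA estimator is an average of rank-one outer products of segment DFT vectors, so $\transpose{\mathbf{v}}\widehat{S}_U(\omega_j)\mathbf{v}=\widehat{f}_{\transpose{\mathbf{v}}\mathbf{U}}(\omega_j)\cdot\vnorm{\mathbf{v}}_2^{2}\ge 0$, exactly as argued for the true spectrum $S_{\mathbf{X}}(\lambda)$ above from $f_y(\lambda)\ge 0$. In particular $\transpose{\mathbf{w}_i}\widehat{S}_U(\omega_j)\mathbf{w}_i$ is a non-negative real number (a scalar equal to its own transpose, hence to its own conjugate since $\widehat{S}_U(\omega_j)$ is Hermitian and $\mathbf{w}_i$ real), so $\overline{\widehat{S}^{(i)}_U}$ is a real-linear combination of Hermitian matrices and is itself Hermitian; for the minimization over real unit vectors only its (real symmetric) real part is relevant. (b) As noted right after \eqref{eq:h_w_T}, the weights $\widehat{\pi}(j\mid\mathbf{w}_i)=\transpose{\mathbf{w}_i}\widehat{S}_U(\omega_j)\mathbf{w}_i$ form a discrete probability distribution over $\{\omega_0,\dots,\omega_{T-1}\}$, hence $\widehat{\pi}(j\mid\mathbf{w}_i)\in[0,1]$ for every $j$, and therefore $\ell(\mathbf{w}_i;\omega_j)=\log\widehat{\pi}(j\mid\mathbf{w}_i)\le 0$. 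This is precisely the place where working with discrete rather than differential entropy matters: a probability mass function is bounded by $1$, whereas a density need not be, so its log can be positive.

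Then, using the definition $\overline{\widehat{S}^{(i)}_U}=-\frac1T\sum_{j=0}^{T-1}\widehat{S}_U(\omega_j)\,\ell(\mathbf{w}_i;\omega_j)$, for any $\mathbf{v}\in\R^{n}$ one has
\[
\transpose{\mathbf{v}}\,\overline{\widehat{S}^{(i)}_U}\,\mathbf{v}
=\frac1T\sum_{j=0}^{T-1}\underbrace{\bigl(\transpose{\mathbf{v}}\widehat{S}_U(\omega_j)\mathbf{v}\bigr)}_{\ge 0}\;\underbrace{\bigl(-\ell(\mathbf{w}_i;\omega_j)\bigr)}_{\ge 0}\ \ge\ 0,
\]
invoking (a) for the first bracket and (b) for the second. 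Since $\mathbf{v}$ is arbitrary, $\overline{\widehat{S}^{(i)}_U}\succeq 0$, which is the claim.

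The only genuine subtlety I anticipate is the degenerate case $\widehat{\pi}(j\mid\mathbf{w}_i)=0$, where $\ell(\mathbf{w}_i;\omega_j)=-\infty$; I would handle it by noting that WOSA estimates are strictly positive at all Fourier frequencies for non-degenerate data, so $\ell(\mathbf{w}_i;\omega_j)$ is finite, or, if one wants a formal argument, by observing that positive semi-definiteness forces $\widehat{S}_U(\omega_j)\mathbf{w}_i=0$ in that case so the offending frequency can be dropped from the sum. The real-versus-complex bookkeeping mentioned in (a) is purely cosmetic and does not affect anything.
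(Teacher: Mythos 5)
Your proof is correct and takes essentially the same route as the paper: expand the quadratic form over the Fourier frequencies and observe that each summand is a product of $-\ell(\mathbf{w}_i;\omega_j)\ge 0$ and $\transpose{\mathbf{v}}\widehat{S}_U(\omega_j)\mathbf{v}\ge 0$. You additionally spell out why $-\ell\ge 0$ (the weights form a probability distribution, hence lie in $[0,1]$) and handle the $\widehat{\pi}(j\mid\mathbf{w}_i)=0$ edge case, both of which the paper leaves implicit.
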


Thus \eqref{eq:update_w_last} can be solved analytically by the last eigenvector of $\overline{\widehat{S}^{(i)}_U}$ -- automatically guaranteeing $\vnorm{\mathbf{w}}_2 = 1$. The procedure iterates until $\vnorm{\mathbf{w}_{i+1} - \mathbf{w}_{i}} < tol$ for some tolerance level $tol$. For initialization we sample $\mathbf{w}_0$ from an $n$-dimensional uniform hyper-cube, $U_n(-1,1)$, and normalize to $\mathbf{w}_0 = \mathbf{w}_0 / \sqrt{\sum_{j=1}^{n} w_{j,0}^2}$.

\newcommand{\figWidth}{0.24}
\begin{figure*}[!t]
        \centering
        \begin{subfigure}[t]{\figWidth\textwidth}
                \centering
                \includegraphics[width=\textwidth]{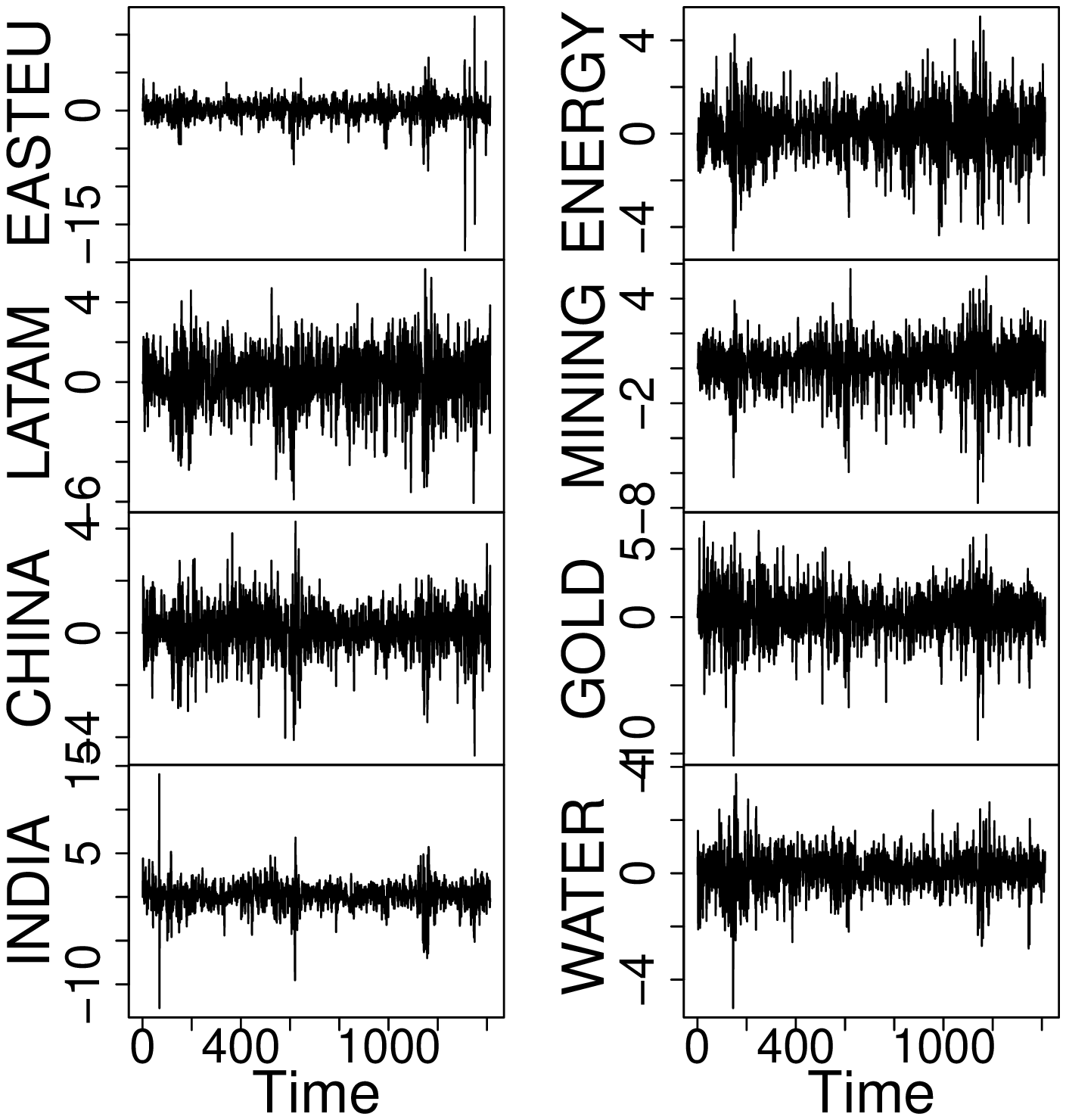}
                \caption{\label{fig:equityFunds_series} daily returns in $\%$}
        \end{subfigure}%
        \begin{subfigure}[t]{\figWidth\textwidth}
                \centering
                \includegraphics[width=\textwidth]{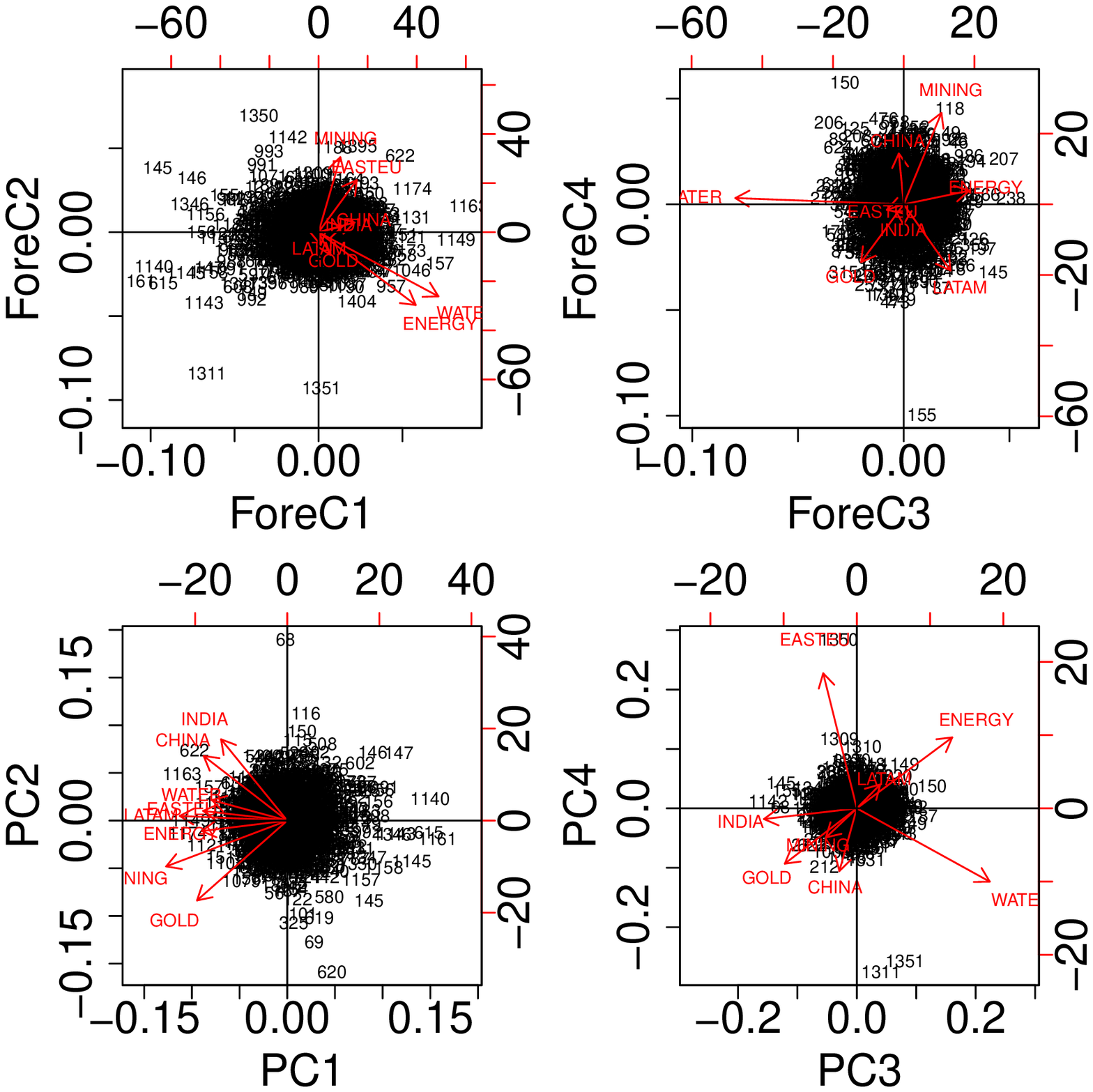}
                \caption{\label{fig:equityFunds_biplot_PCA_ForeCA} biplots of ForeCA (top) and PCA (bottom)}
        \end{subfigure}
        \begin{subfigure}[t]{\figWidth\textwidth}
                \centering
                \includegraphics[width=\textwidth]{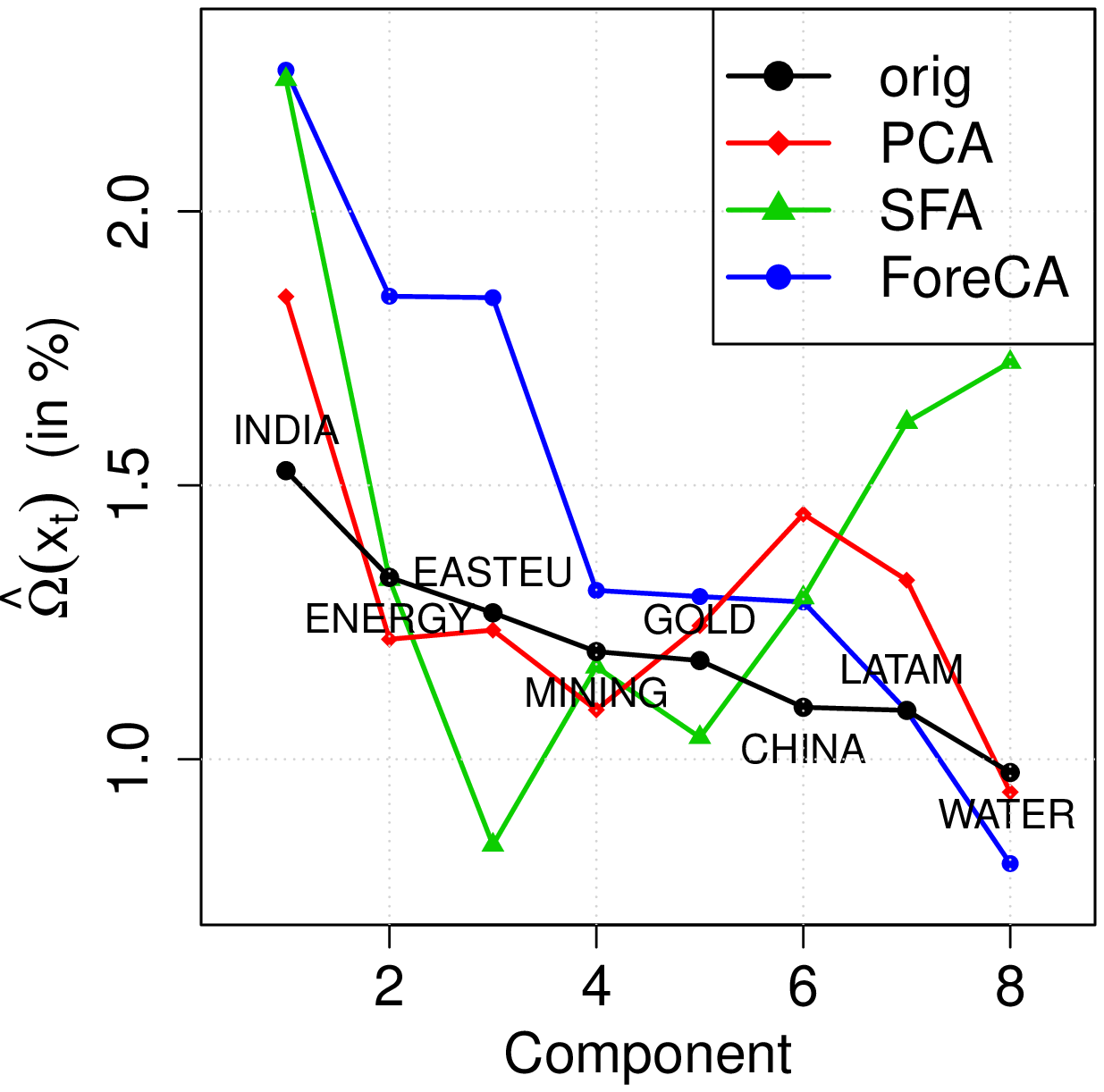}
                \caption{\label{fig:equityFunds_Omega_Orig_PCA_SFA_ForeCA} scree-plot of $\widehat{\Omega}(\cdot)$}
        \end{subfigure}%
        \begin{subfigure}[t]{\figWidth\textwidth}
                \centering
                \includegraphics[width=\textwidth]{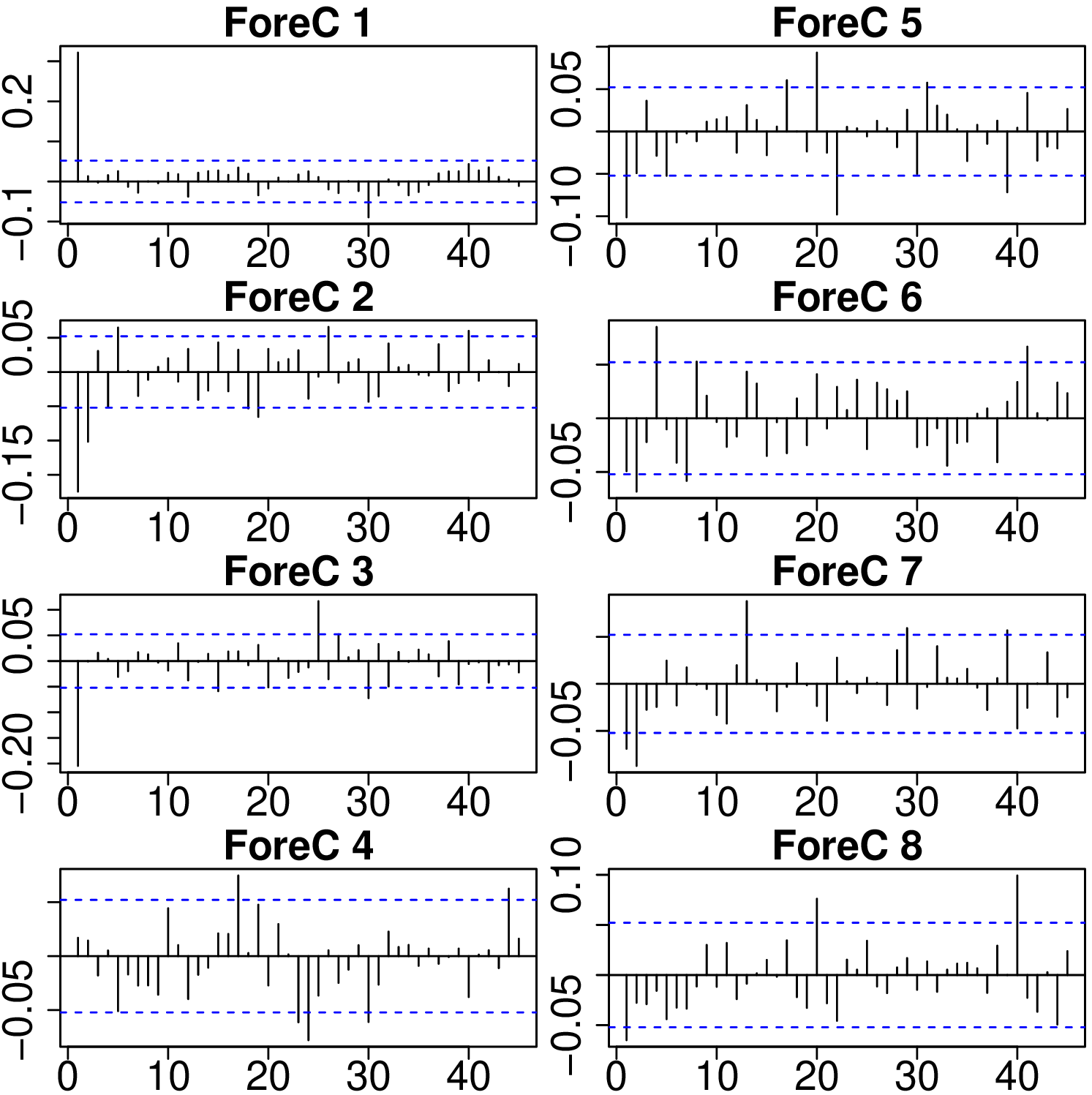}
                \caption{\label{fig:equityFunds_ForeCs_ACF} sample ACF $\widehat{\rho}(k)$ of ForeCs ($\widehat{\rho}(0) = 1$ omitted)}
        \end{subfigure}
        \caption[ForeCA results.]{\label{fig:ForeCA_equityFunds} Equity fund returns analyzed with PCA, SFA, and ForeCA. (Dataset \texttt{equityFunds} in R package \texttt{fEcofin}.)}
\end{figure*}

\begin{theorem}[Convergence]
\label{thm:convergence}
The sequence $\lbrace \mathbf{w}_i \rbrace_{i \geq 0}$ obtained via \eqref{eq:update_w_last} converges to a local minimum $\widehat{h}_T(\mathbf{w}^{*}) = \lambda_{\min}^{(*)} \geq 0$, where $\lim_{i \rightarrow \infty} \mathbf{w}_{i} = \mathbf{w}^*$ and $\lambda_{\min}^{(*)}$ is the smallest eigenvalue of $\overline{\widehat{S}^{(*)}_U}$.
\end{theorem}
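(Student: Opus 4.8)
My plan is to recognize the update \eqref{eq:update_w_last} as a majorize--minimize (MM) iteration and then run the three standard steps of such convergence proofs: (i) show $\widehat{h}_T$ decreases monotonically along $\{\mathbf{w}_i\}$ and is bounded below, so the objective values converge; (ii) use compactness of the unit sphere together with continuity to show every limit point of $\{\mathbf{w}_i\}$ is a fixed point of the update, i.e.\ a minimizing eigenvector of the corresponding $\overline{\widehat{S}^{(\mathbf{w})}_U}$, which pins the limiting value to $\lambda_{\min}^{(*)}$; (iii) check that such a fixed point satisfies the first-order conditions for a (local) minimum of $\widehat{h}_T$ on the sphere.

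\textbf{Step 1 (monotone descent).} Write $p_j(\mathbf{w}) := \transpose{\mathbf{w}}\widehat{S}_U(\omega_j)\mathbf{w} = \widehat{\pi}(j\mid\mathbf{w})$, which by the remark preceding Proposition~\ref{prop:semi_def} is, for each unit vector, a probability distribution over $j$ (this is exactly what the constraint $\vnorm{\mathbf{w}}_2=1$ buys), with $\widehat{\ell}(\mathbf{w};\omega_j)=\log p_j(\mathbf{w})$. Extending the notation of \eqref{eq:update_w_last} to general unit vectors, set $Q(\mathbf{w};\mathbf{w}') := \transpose{\mathbf{w}}\,\overline{\widehat{S}^{(\mathbf{w}')}_U}\,\mathbf{w} = -\tfrac1T\sum_j p_j(\mathbf{w})\log p_j(\mathbf{w}')$, so $Q(\mathbf{w};\mathbf{w})=\widehat{h}_T(\mathbf{w})$ (modulo the zero-frequency term $\omega_0$, immaterial for zero-mean series). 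The key elementary fact is that for any two unit vectors
\begin{align*}
Q(\mathbf{w};\mathbf{w}') - \widehat{h}_T(\mathbf{w}) &= \frac1T\sum_j p_j(\mathbf{w})\log\frac{p_j(\mathbf{w})}{p_j(\mathbf{w}')} \\
&= \frac1T\,\KL{p(\mathbf{w})}{p(\mathbf{w}')}\;\geq\;0
\end{align*}
by Gibbs' inequality, with equality iff $p(\mathbf{w})=p(\mathbf{w}')$; hence $Q(\cdot\,;\mathbf{w}')$ majorizes $\widehat{h}_T$ and is tangent to it at $\mathbf{w}'$. Since the update makes $\mathbf{w}_{i+1}$ a minimizer of $\mathbf{w}\mapsto\transpose{\mathbf{w}}\overline{\widehat{S}^{(i)}_U}\mathbf{w}$ on the sphere — legitimate precisely because $\overline{\widehat{S}^{(i)}_U}\succeq 0$ (Proposition~\ref{prop:semi_def}), so this minimum is $\lambda_{\min}(\overline{\widehat{S}^{(i)}_U})\geq 0$ — I get $\widehat{h}_T(\mathbf{w}_{i+1}) \leq Q(\mathbf{w}_{i+1};\mathbf{w}_i) \leq Q(\mathbf{w}_i;\mathbf{w}_i) = \widehat{h}_T(\mathbf{w}_i)$. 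Thus $\widehat{h}_T(\mathbf{w}_i)$ is non-increasing and $\geq 0$, hence converges to some $h^{*}\geq 0$.

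\textbf{Step 2 (limit points are fixed points; identifying the value).} Tracing the two inequalities above yields a strict-descent dichotomy: if $\mathbf{w}$ is not an eigenvector of $\overline{\widehat{S}^{(\mathbf{w})}_U}$ for its smallest eigenvalue, then one step of \eqref{eq:update_w_last}, call it $M(\mathbf{w})$, gives $\widehat{h}_T(M(\mathbf{w})) < \widehat{h}_T(\mathbf{w})$ strictly. Assume (generically true for WOSA estimates once $T$ is moderately large) that $\widehat{S}_U(\omega_j)\succ 0$ for all $j$, so $p_j(\mathbf{w})>0$ and $\mathbf{w}\mapsto\overline{\widehat{S}^{(\mathbf{w})}_U}$ and $\widehat{h}_T$ are continuous, and that $\lambda_{\min}^{(*)}$ of $\overline{\widehat{S}^{(\mathbf{w}^{*})}_U}$ is simple, so $M$ is single-valued and continuous near $\mathbf{w}^{*}$. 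Since the sphere is compact, some subsequence $\mathbf{w}_{i_k}\to\mathbf{w}^{*}$; were $\mathbf{w}^{*}$ not a fixed point, the dichotomy gives $\widehat{h}_T(M(\mathbf{w}^{*})) < h^{*}$, whence $\widehat{h}_T(\mathbf{w}_{i_k+1}) = \widehat{h}_T(M(\mathbf{w}_{i_k})) \to \widehat{h}_T(M(\mathbf{w}^{*})) < h^{*}$, contradicting $\widehat{h}_T(\mathbf{w}_i)\to h^{*}$. Hence $\mathbf{w}^{*}$ is a fixed point: $\mathbf{w}^{*} = \argmin_{\vnorm{\mathbf{w}}_2=1}\transpose{\mathbf{w}}\,\overline{\widehat{S}^{(*)}_U}\,\mathbf{w}$, so $\widehat{h}_T(\mathbf{w}^{*}) = \transpose{\mathbf{w}^{*}}\,\overline{\widehat{S}^{(*)}_U}\,\mathbf{w}^{*} = \lambda_{\min}^{(*)}\geq 0$, exactly the claimed value; under the simplicity assumption this also upgrades subsequential to full convergence $\mathbf{w}_i\to\mathbf{w}^{*}$ by the standard isolated-fixed-point argument for MM iterations. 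Finally, $Q(\cdot\,;\mathbf{w}^{*})-\widehat{h}_T(\cdot)=\tfrac1T\KL{p(\cdot)}{p(\mathbf{w}^{*})}\geq 0$ is minimized (value $0$) at $\mathbf{w}^{*}$, so its gradient vanishes there; hence $\nabla\widehat{h}_T(\mathbf{w}^{*}) = \left.\nabla_{\mathbf{w}}Q(\mathbf{w};\mathbf{w}^{*})\right|_{\mathbf{w}=\mathbf{w}^{*}} = 2\lambda_{\min}^{(*)}\mathbf{w}^{*}$, the Lagrange condition for a constrained critical point of $\widehat{h}_T$ on the sphere, which is a local minimum under the usual non-degeneracy of the restricted Hessian.

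\textbf{Expected main obstacle.} Step 1 is essentially automatic once the MM structure is seen — it is just Gibbs' inequality plus Proposition~\ref{prop:semi_def}. The real work is in Step 2: promoting convergence of the objective values to convergence of the iterates, and certifying that a limit point is a genuine minimizer rather than a saddle, which is the well-known delicate point of EM/MM convergence arguments. I would control it with the mild, generically-satisfied hypotheses that $\widehat{S}_U(\omega_j)\succ 0$ for every $j$ (giving continuity of $\mathbf{w}\mapsto\overline{\widehat{S}^{(\mathbf{w})}_U}$, and also ruling out the $0\cdot(-\infty)$ pathology in $\overline{\widehat{S}^{(i)}_U}$) and that the relevant smallest eigenvalues are simple (making the eigenvector step single-valued and continuous), after which the standard global-convergence machinery for MM algorithms applies.
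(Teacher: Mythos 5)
Your Step 1 is exactly the paper's proof. The paper establishes $0 \leq \widehat{h}_T(\mathbf{w}) \leq \log_a T$ and then the same two inequalities you derive: $\widehat{h}_T(\mathbf{w}_i) = \transpose{\mathbf{w}_i}\,\overline{\widehat{S}^{(i)}_U}\,\mathbf{w}_i \geq \transpose{\mathbf{w}_{i+1}}\,\overline{\widehat{S}^{(i)}_U}\,\mathbf{w}_{i+1}$ because $\mathbf{w}_{i+1}$ is the minimizing eigenvector, and $\transpose{\mathbf{w}_{i+1}}\,\overline{\widehat{S}^{(i)}_U}\,\mathbf{w}_{i+1} \geq \widehat{h}_T(\mathbf{w}_{i+1})$ by Gibbs' inequality (cross-entropy at least entropy) --- and it stops there. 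Where you differ is your Step 2: the paper's argument ends with monotone convergence of the \emph{objective values} and does not separately show that the iterates $\mathbf{w}_i$ themselves converge, that any limit point is a fixed point whose value equals $\lambda_{\min}^{(*)}$, or that it is a genuine local minimum rather than a saddle; all of these appear in the theorem statement but are left implicit in the paper. Your extra, generically satisfied hypotheses --- $\widehat{S}_U(\omega_j)\succ 0$ so that $\log p_j(\mathbf{w})$ is finite and the map $\mathbf{w}\mapsto\overline{\widehat{S}^{(\mathbf{w})}_U}$ is continuous, and simplicity of the limiting smallest eigenvalue so the eigenvector update is single-valued and continuous --- are precisely what is needed to close that gap via the standard MM fixed-point machinery, so your write-up is a strict superset of the paper's. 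One further point in your favor: the equality case of Gibbs' inequality only yields $p(\mathbf{w}_{i+1}) = p(\mathbf{w}_i)$, not $\mathbf{w}_{i+1}=\mathbf{w}_i$ directly (the paper's ``equality iff $\mathbf{w}_{i+1}=\mathbf{w}_i=\mathbf{w}^*$'' elides this), and it is again your simplicity/positivity assumptions that let one pass from equality of the distributions $p$ to equality of the vectors.
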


\begin{corollary}
The transformed data $\mathbf{y}_{1}^{T, (*)} = \transpose{\mathbf{w}^{(*)} \,} \mathbf{X}_1^T$ satisfies
\begin{equation}
\widehat{\Omega}\left(\mathbf{y}_{1}^{T, (*)} \right) = 1 - \lambda_{\min}^{*}.
\end{equation}
\end{corollary}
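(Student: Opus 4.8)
The plan is to establish the pointwise identity $\widehat{h}_T(\mathbf{w}) = 1 - \widehat{\Omega}(\transpose{\mathbf{w}}\mathbf{X}_t)$ for every admissible $\mathbf{w}$, where $\widehat{h}_T$ is the discretized objective \eqref{eq:h_w_T} and $\widehat{\Omega}$ is the plug-in estimator \eqref{eq:estimate_Omega}. Granting this, the corollary is immediate: Theorem \ref{thm:convergence} gives $\widehat{h}_T(\mathbf{w}^{(*)}) = \lambda_{\min}^{(*)}$ at the limit $\mathbf{w}^{(*)} = \lim_i \mathbf{w}_i$, hence $\widehat{\Omega}(\mathbf{y}_1^{T,(*)}) = 1 - \lambda_{\min}^{(*)}$.

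First I would reduce to the whitened problem. The estimator $\widehat{\Omega}$ is invariant to scaling --- the discrete analogue of Property \ref{prop:Omega}\ref{item:shift_scale_invariant}: replacing $y_t$ by $a y_t$, $a \neq 0$, multiplies every spectral ordinate $\widehat{S}_y(\omega_j)$ by $a^2$, which cancels in the normalized ordinates $\widehat{f}_{j,y}$ of \eqref{eq:estimate_Omega}. So I may work with $\mathbf{U}_t = \widehat{\Sigma}_{\mathbf{X}}^{-1/2}\mathbf{X}_t$ and unit-norm $\mathbf{w}$ as in Section \ref{sec:solving_optim_problem}; by the change of variables there, $\mathbf{y}_1^{T,(*)} = \transpose{\mathbf{w}^{(*)}}\mathbf{X}_1^T$ is precisely the whitened-space projection produced by the iteration \eqref{eq:update_w_last}. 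For unit-norm $\mathbf{w}$ the spectral estimate of $\transpose{\mathbf{w}}\mathbf{U}_t$ is the quadratic form $\transpose{\mathbf{w}}\widehat{S}_U(\omega_j)\mathbf{w} \geq 0$, and the whitening $\widehat{\Sigma}_U = \mathbf{I}_n$ forces its total mass $\sum_j \transpose{\mathbf{w}}\widehat{S}_U(\omega_j)\mathbf{w}$ to be a constant independent of $\mathbf{w}$; thus $\widehat{f}_{j,y}$ is proportional to $\transpose{\mathbf{w}}\widehat{S}_U(\omega_j)\mathbf{w}$, which is exactly the statement in Section \ref{sec:EM} that $\widehat{\pi}(j \mid \mathbf{w})$ is (up to that constant) the probability vector $(\widehat{f}_{j,y})_j$.

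I would then unwind both definitions. Since $\widehat{f}_{j,y}$ is the renormalization of the weight $\transpose{\mathbf{w}}\widehat{S}_U(\omega_j)\mathbf{w}$ of \eqref{eq:h_w_T}, and $\widehat{\ell}(\mathbf{w};\omega_j)$ is --- up to the base-$T$ convention of \eqref{eq:estimate_Omega} --- $\log_T \widehat{f}_{j,y}$, collecting the $\frac{1}{T}$ prefactor and the renormalizing constant turns \eqref{eq:h_w_T} into $\widehat{h}_T(\mathbf{w}) = -\sum_j \widehat{f}_{j,y}\log_T\widehat{f}_{j,y}$, i.e.\ the base-$T$ discrete spectral entropy of $y_t = \transpose{\mathbf{w}}\mathbf{X}_t$, the discretization of the functional minimized in \eqref{eq:minimize_neg_entropy_2}. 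The $j = 0$ term drops out: since $\boldsymbol{\mu} = \mathbf{0}$ (WLOG) the periodogram at the zero frequency vanishes, so $\widehat{f}_{0,y} = 0$ and $0 \log_T 0 = 0$ --- which is why \eqref{eq:h_w_T} may start the sum at $j = 1$ while \eqref{eq:estimate_Omega} starts at $j = 0$. Comparing with \eqref{eq:estimate_Omega} then yields $\widehat{\Omega}(\transpose{\mathbf{w}}\mathbf{X}_t) = 1 + \sum_j \widehat{f}_{j,y}\log_T\widehat{f}_{j,y} = 1 - \widehat{h}_T(\mathbf{w})$.

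The main obstacle, I expect, is exactly this last step: carefully tracking the $\frac{1}{T}$ factor, the logarithm base $T$, and the normalization of $\widehat{S}_U(\omega_j)$ inherited from the periodogram / WOSA estimate, so that $\widehat{h}_T$ is matched to $1 - \widehat{\Omega}$ \emph{exactly} and not merely up to an affine change. If the conventions in \eqref{eq:h_w_T} and \eqref{eq:estimate_Omega} are not literally aligned, the cleanest repair is to fold the discrepancy into the definition of $\widehat{\ell}$ (e.g.\ taking it in base $T$), which changes neither the minimizer $\mathbf{w}^*$ nor the eigenvector update \eqref{eq:update_w_last}. Everything else is substitution plus the already-established Theorem \ref{thm:convergence}; note that the bound $\lambda_{\min}^{(*)} \geq 0$ stated there is consistent with $\widehat{\Omega} \in [0,1]$.
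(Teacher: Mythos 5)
Your proposal is correct and follows exactly the route the paper intends: the corollary is stated without proof as an immediate consequence of Theorem \ref{thm:convergence} together with the identification $\widehat{\Omega} = 1 - \widehat{h}_T$, which is precisely the identity you establish. Your careful tracking of the $\frac{1}{T}$ prefactor, the base-$T$ logarithm, the normalization of the spectral ordinates under whitening, and the $j=0$ term is in fact more explicit than anything the paper provides, and correctly identifies where the conventions must be aligned for the equality to hold exactly rather than up to an affine rescaling.
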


\begin{proof}[Proof of Theorem \ref{thm:convergence}]
The entropy of a RV taking values in a finite alphabet $\lbrace \omega_0, \ldots, \omega_{T-1} \rbrace$ is bounded: $ 0 \leq \widehat{h}_T(\mathbf{w}) \leq \log_a T$ for all $\mathbf{w} \in \R^{n}$. For convergence it remains to be shown that $\widehat{h}_T(\mathbf{w}_{i}) \geq \widehat{h}_T(\mathbf{w}_{i+1})$ with equality iff $\mathbf{w}_{i+1} = \mathbf{w}_i = \mathbf{w}^*$. First, 
\begin{align}
\widehat{h}_T(\mathbf{w}_i) 
& = - \frac{1}{T} \sum_{j=1}^{T-1} \transpose{\mathbf{w}_i} S_U(\omega_j) \mathbf{w}_i \cdot \widehat{\ell}(\mathbf{w}_i; \omega_j)\nonumber \\
 \label{eq:eigenvector_increases_h}
& = \transpose{\mathbf{w}_i} \overline{\widehat{S}^{(i)}_U} \mathbf{w}_i  \geq \transpose{\mathbf{w}_{i+1}} \overline{\widehat{S}^{(i)}_U} \mathbf{w}_{i+1}
\end{align}
since $\mathbf{w}_{i+1}$ is the last eigenvector of $ \overline{\widehat{S}^{(i)}_U} $. Second,
\begin{align}
\transpose{\mathbf{w}_{i+1}} \overline{\widehat{S}^{(i)}_U} \mathbf{w}_{i+1}  & = - \frac{1}{T}\sum_{j=1}^{T-1} \transpose{\mathbf{w}_{i+1}} S_U(\omega_j) \mathbf{w}_{i+1} \cdot \widehat{\ell}(\mathbf{w}_i; \omega_j) \nonumber \\
\label{eq:KL_divergence_greater_0_increases_h}
& \geq - \frac{1}{T} \sum_{j=1}^{T-1} \transpose{\mathbf{w}_{i+1}} S_U(\omega_j) \mathbf{w}_{i+1} \cdot \widehat{\ell}(\mathbf{w}_{i+1}; \omega_j)\\
& = \widehat{h}_T(\mathbf{w}_{i+1}), \nonumber
\end{align}
where \eqref{eq:KL_divergence_greater_0_increases_h} holds as $\E_p -\log q = - \sum_{j=1}^{n} p_j \log q_j \geq - \sum_{j=1}^{n} p_j \log p_j = \E_p - \log p$ for any $q \neq p$.
\end{proof}

To lower the chance of landing in local optima we repeat \eqref{eq:update_w_last} for several random starting positions $\mathbf{w}_0$ and then select the best solution.

\subsection{Obtaining a $K$-dimensional Subspace}

To obtain all $K$ loadings $\mathbf{W}_{1, \ldots, K} = \left[ \mathbf{w}_1, \ldots, \mathbf{w}_K \right]$ that give uncorrelated series $y_{j, t}$, we iteratively (starting at $k=1$)
\begin{inparaenum}[i)]
\item compute $\mathbf{w}_k$,
\item project $\mathbf{U}$ onto the null space of $\mathbf{W}_{1, \ldots, k}$ $\rightarrow \mathbf{U}^{(k)} = \mathbf{W}_{1, \ldots, k}^{\perp} \mathbf{U} \in \R^{K-k}$,
\item apply the EM-type algorithm on $\mathbf{U}^{(k)}$ to obtain $\tilde{\mathbf{w}}_{k+1}$, and finally
\item transform $\tilde{\mathbf{w}}_{k+1}$ back to loadings $\mathbf{w}^{(k)}$ of $\mathbf{U}$.
\end{inparaenum}

Doing this for $k = 1, \ldots, K$ gives $K$ loadings $\widehat{\mathbf{W}}_U$.  Loadings for $\mathbf{X}_t$ are given by $\widehat{\mathbf{W}}_X = \widehat{\mathbf{W}}_U \widehat{\Sigma}_X^{-1/2}$.

\section{Applications}
\label{sec:applications}

Here we demonstrate the usefulness of ForeCA to find informative, forecastable signals, but also as a tool for time series classification.

\subsection{Improving Portfolio Forecasts}

Figure \ref{fig:equityFunds_series} shows daily returns of eight equity funds from $2002/01/01$ to $2007/05/31$ ($T = 1413$). In the financial context finding forecastable series is an important goal by itself, not just for structure discovery. In particular, we can interpret a linear combination $\mathbf{w}$ as a portfolio of stocks. The $\mathbf{w}^*$ with the highest $\Omega$ gives the most forecastable portfolio.

Figure \ref{fig:equityFunds_biplot_PCA_ForeCA} shows a bi-plot for PCA and ForeCA for $(\mathbf{w}_1, \mathbf{w}_2)$ and $(\mathbf{w}_3, \mathbf{w}_4)$. As PC $1$ weighs all funds almost equally, it represents the average market movement; the second component contrasts Gold \& Mining with the rest and we can therefore label PC $2$ as the ``commodity'' index. The third and fourth PC indicate energy/infrastructure and geographic regions. 

However, even though PC $1$ is also the most predictable PC, it has only a slightly larger $\widehat{\Omega}$ than the most forecastable fund, India (Fig.\ \ref{fig:equityFunds_Omega_Orig_PCA_SFA_ForeCA}). On the other hand, combining Water (weight $w_{water, 1} = 0.72$) with Energy ($0.58$) is almost twice as forecastable as India (weights are from ForeC $1$ in Fig.\ \ref{fig:equityFunds_biplot_PCA_ForeCA}).  ForeC $2$ also has high forecastability by selling Energy \& Water ($-0.53$ \& $-0.47$) and buying Mining \& Eastern Europe ($0.55$ \& $0.38$). The third and fourth ForeCs seem to be hedging strategies (ForeC $3$: Water vs.\ Energy; ForeC $4$: Latin America \& Gold vs.\ China \& Mining). 

As financial data only has very small autocorrelation -- and usually at lag $1$, if any --, SFA and ForeCA yield overall very similar results, except for a ``wrong'' ranking by SFA (Fig.\ \ref{fig:equityFunds_Omega_Orig_PCA_SFA_ForeCA}): SF $8$ is the fastest feature (large, but negative lag $1$ autocorrelation), yet it is the second most forecastable component.  While it is true that white noise is slower than an auto-regressive process of order $1$ ($AR(1)$) with negative autocorrelation, the latter is still more forecastable.  Since we want to reveal intertemporal structure, white noise must be ranked lowest; and ForeCA indeed does so (Fig.\ \ref{fig:equityFunds_ForeCs_ACF}).  

ForeC $5$ and $8$ detect the $20$ day lag (one trading month), but correlations are too low to achieve much higher forecastability than -- simpler and faster  -- SFA.
   
In the next example I study quarterly income data, where ForeCA can leverage its nonparametric power and detect important dependencies at various frequencies automatically from the data.

\subsection{Classification of US State Economies}
\label{sec:US_economies}

\renewcommand{\figWidth}{0.22}
\begin{figure*}[!t]
        \centering
        \begin{subfigure}[t]{\figWidth\textwidth}
                \centering
                \includegraphics[width=\textwidth, trim = 1.5cm 0cm 1.5cm 1.5cm, clip = true]{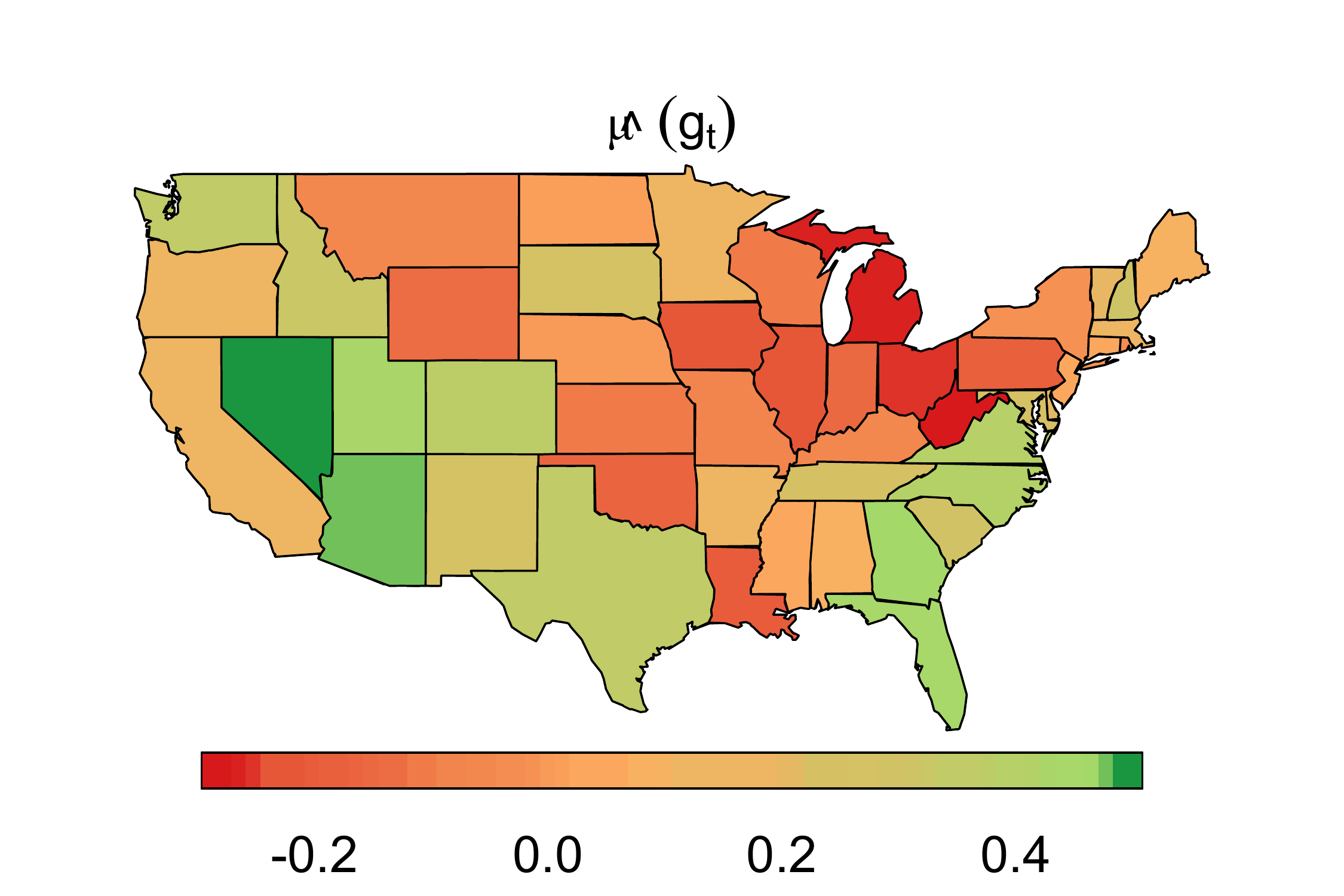}
                \caption{\label{fig:US_income_growth_map_mean} Average }
        \end{subfigure}
        \hspace{0.01\textwidth}
        \begin{subfigure}[t]{\figWidth\textwidth}
                \centering
                \includegraphics[width=\textwidth, trim = 1.5cm 0cm 1.5cm 1.5cm, clip = true]{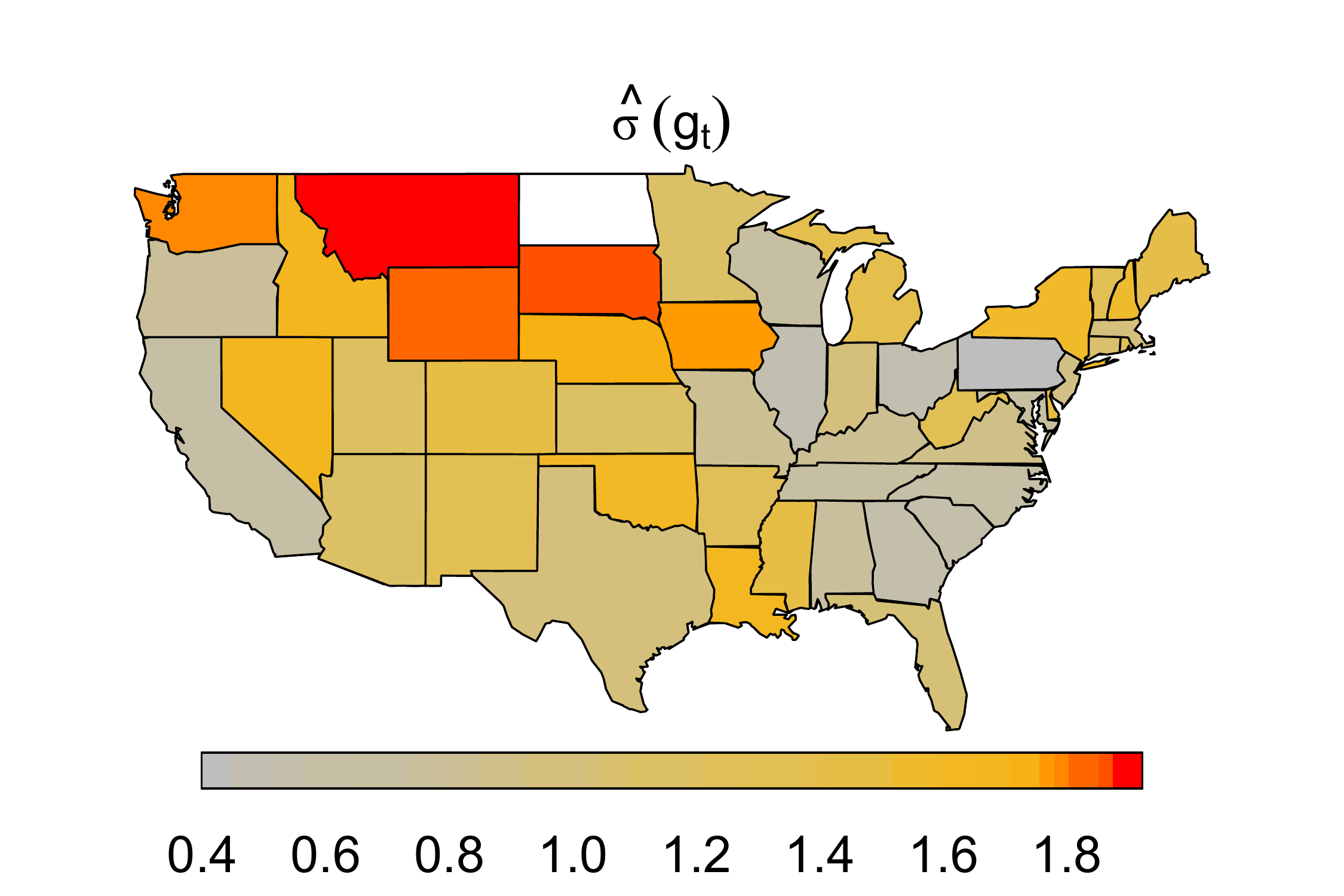}
                \caption{\label{fig:US_income_growth_map_sd} Standard deviation $\widehat{\sigma}$; ND omitted ($\widehat{\sigma}_{ND} = 2.98$).}
        \end{subfigure}%
        \hspace{0.01\textwidth}
        \begin{subfigure}[t]{\figWidth\textwidth}
                \centering
                \includegraphics[width=\textwidth, trim = 1.5cm 0cm 1.5cm 1.5cm, clip = true]{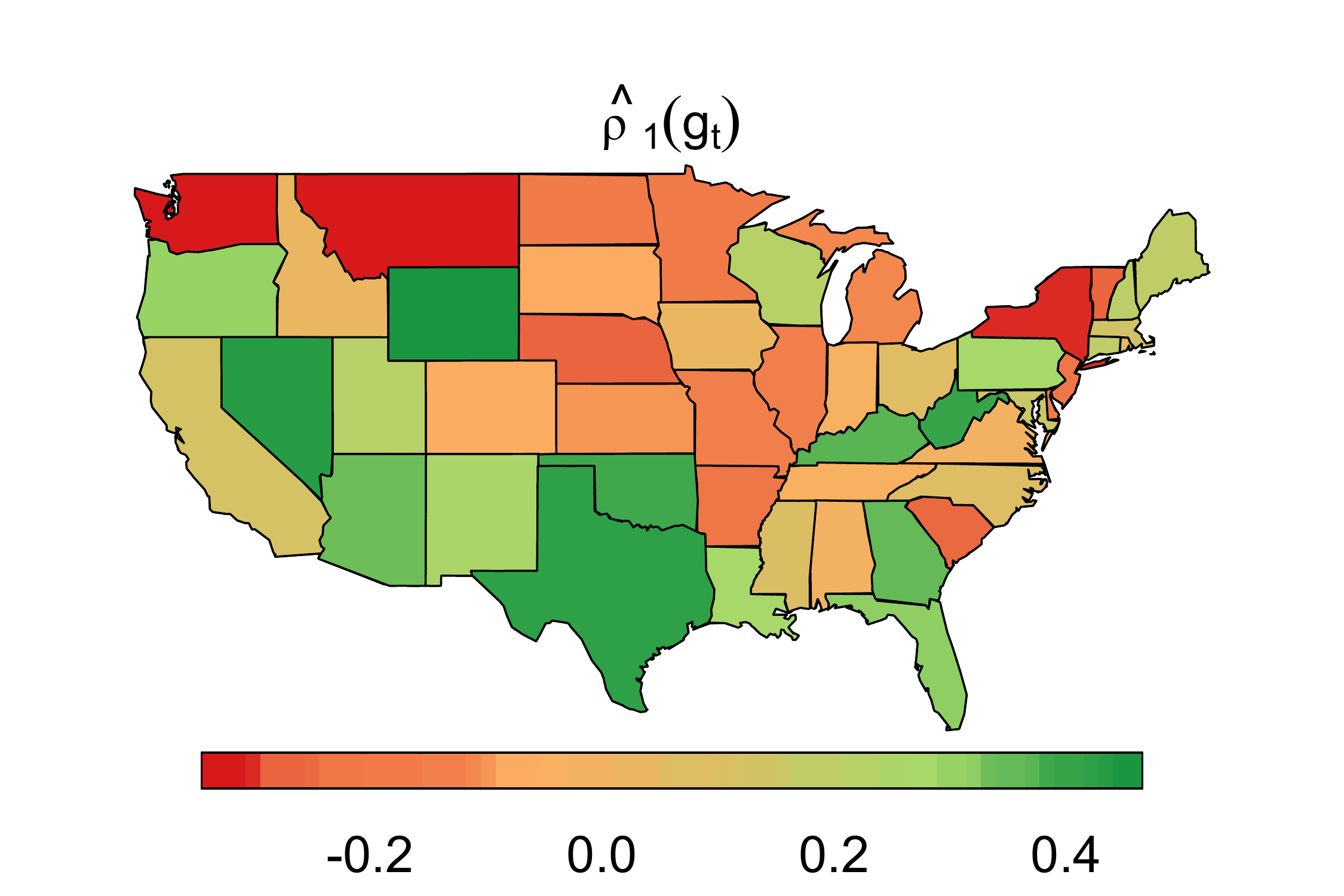}
                \caption{\label{fig:US_income_growth_map_rho1} Lag $k=1$ autocorrelation $\widehat{\rho}(1)$}
        \end{subfigure}
        \hspace{0.01\textwidth}
        \begin{subfigure}[t]{\figWidth\textwidth}
                \centering
                \includegraphics[width=\textwidth, trim = 1.5cm 0cm 1.5cm 1.5cm, clip = true]{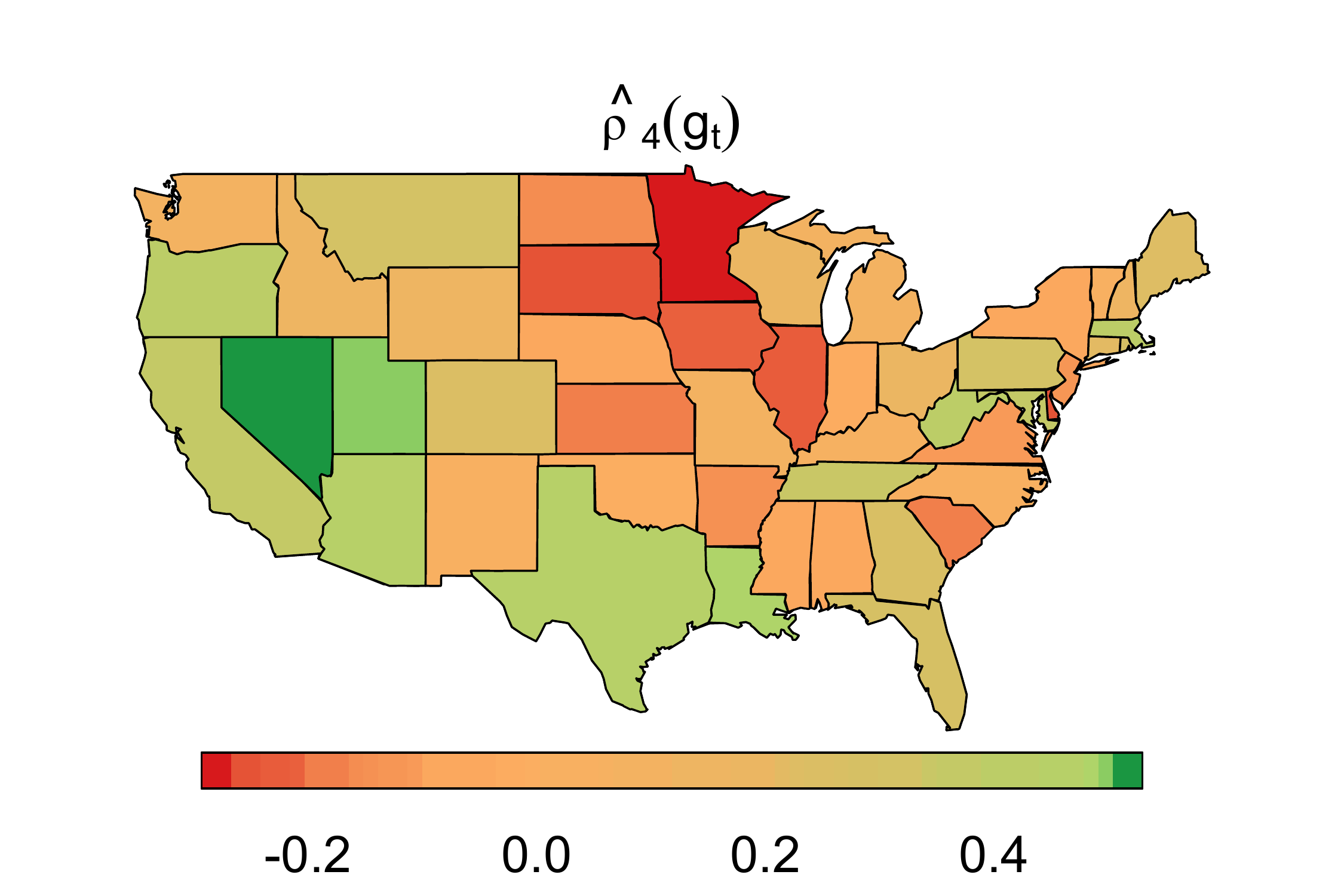}
                \caption{\label{fig:US_income_growth_map_rho4} Lag $k=4$ autocorrelation $\widehat{\rho}(4)$}
        \end{subfigure}%
        
\renewcommand{\figWidth}{0.22}
\begin{subfigure}[t]{\figWidth\textwidth}
                \centering
                \includegraphics[width=\textwidth, trim = 1.5cm 0cm 1.5cm 1.5cm, clip = true]{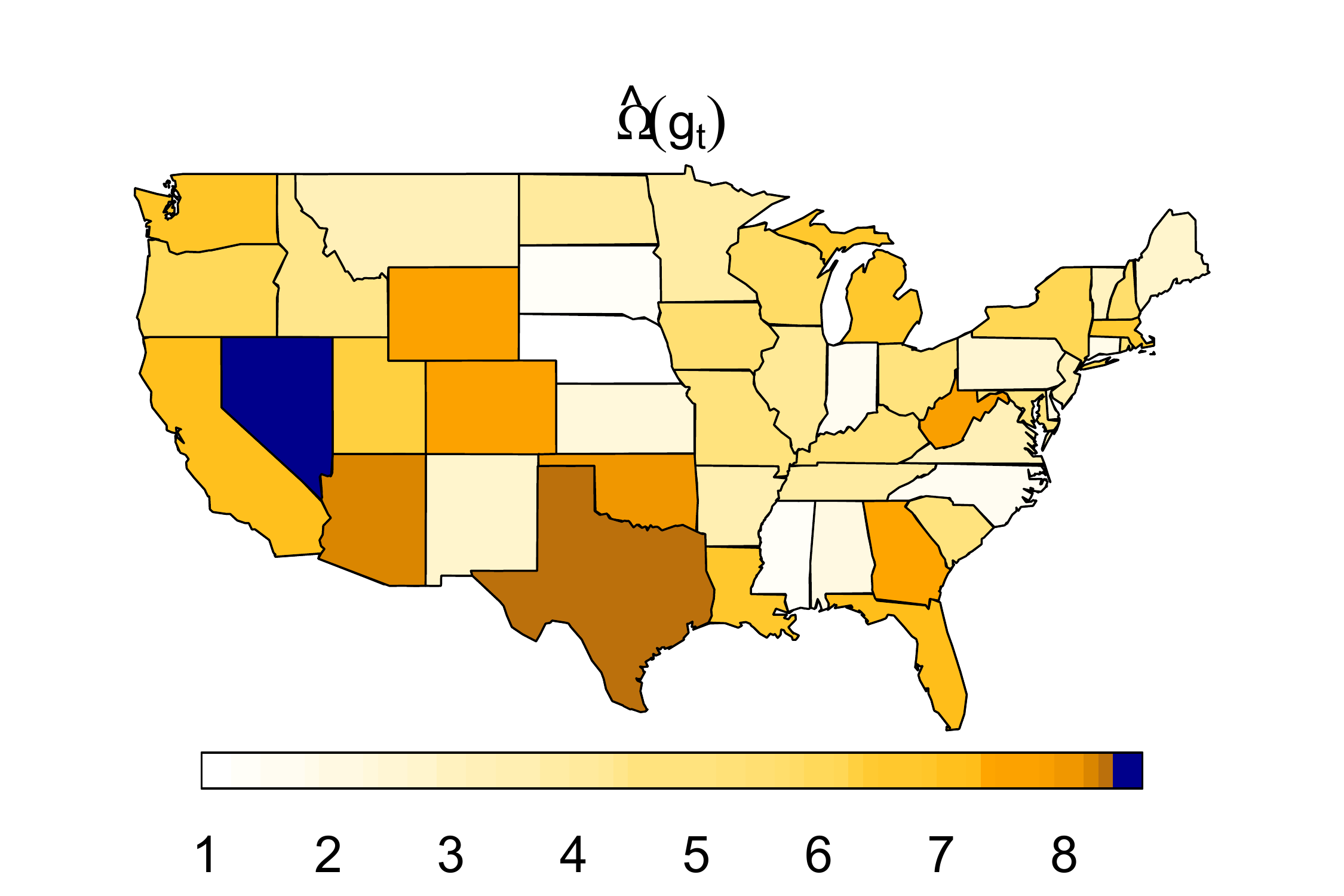}
                \caption{\label{fig:US_income_growth_map_Omega} Forecastability $\widehat{\Omega}$}
        \end{subfigure}
        \hspace{0.01\textwidth}
        \begin{subfigure}[t]{\figWidth\textwidth}
                \centering
                \includegraphics[width=\textwidth]{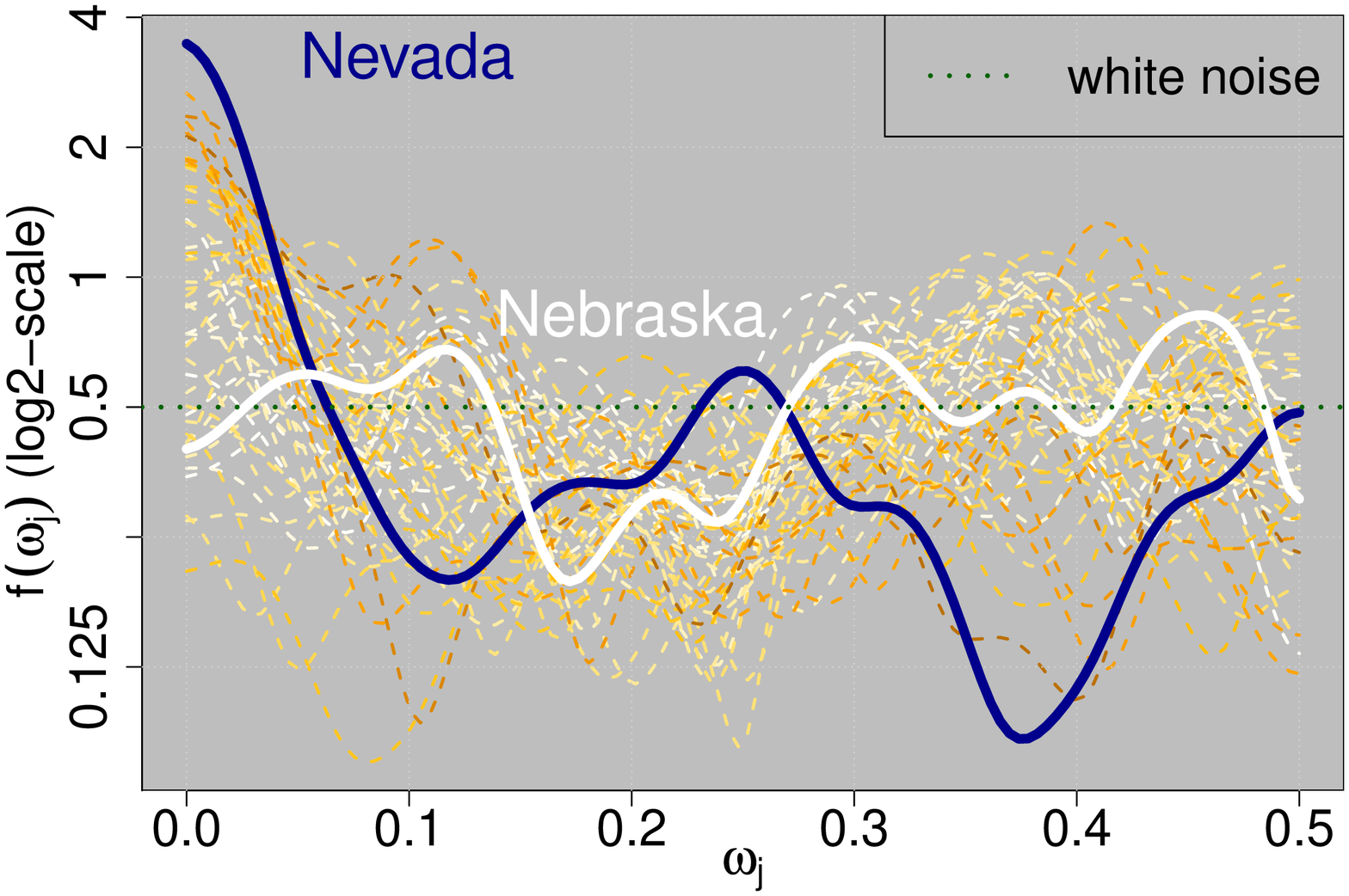}
                \caption{\label{fig:US_income_growth_spectra} Spectra $\widehat{f}_{g}(\lambda)$} 
        \end{subfigure}
        \hspace{0.01\textwidth}
        \begin{subfigure}[t]{\figWidth\textwidth}
                \centering
                \includegraphics[width=\textwidth, trim = 1.5cm 0cm 1.5cm 1.5cm, clip = true]{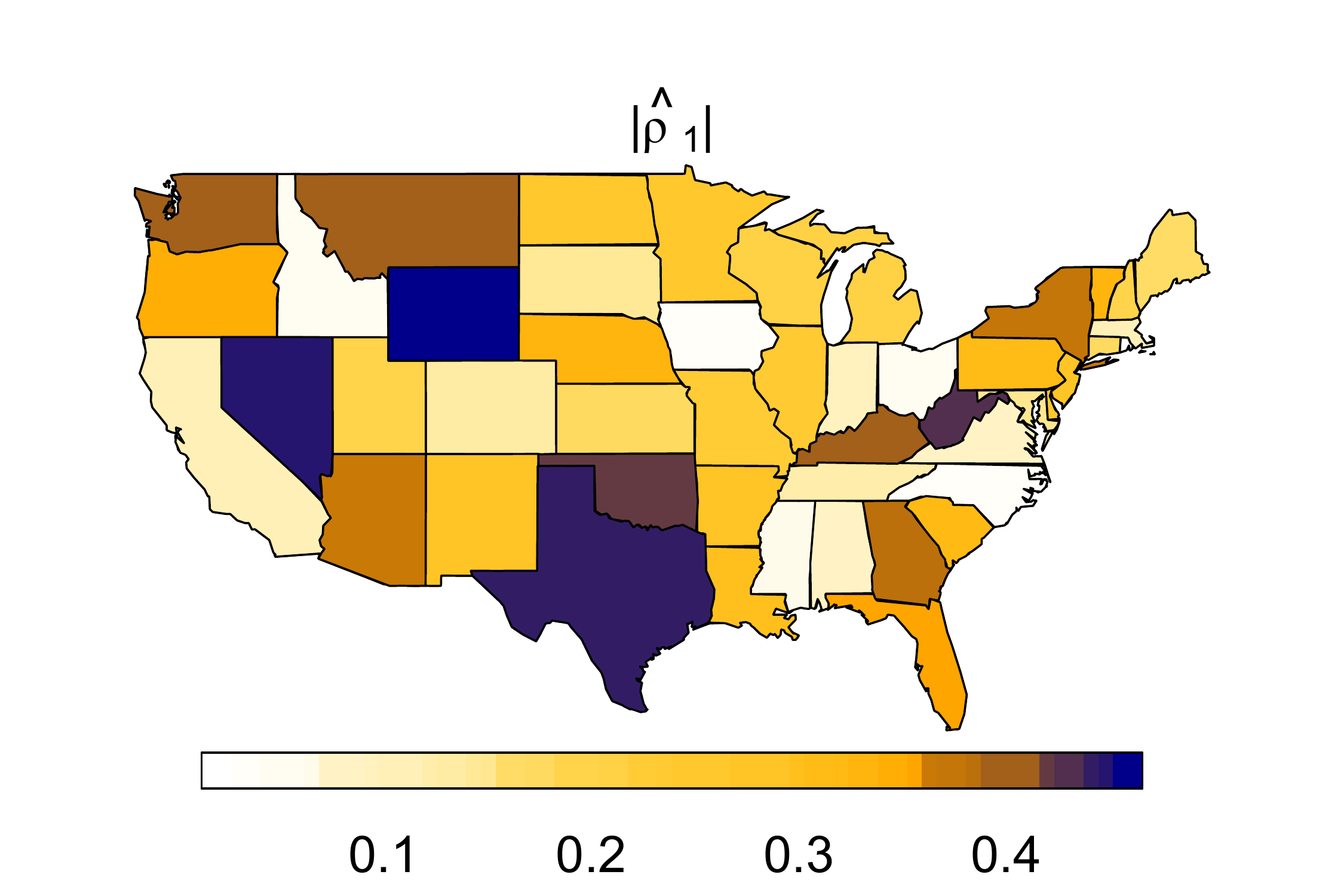}
                \caption{\label{fig:US_income_growth_map_absrho1} Absolute value of $\widehat{\rho}(1)$}
        \end{subfigure}%
        \hspace{0.01\textwidth}
        \begin{subfigure}[t]{\figWidth\textwidth}
                \centering
                \includegraphics[width=\textwidth, trim = 1.5cm 0cm 1.5cm 1.5cm, clip = true]{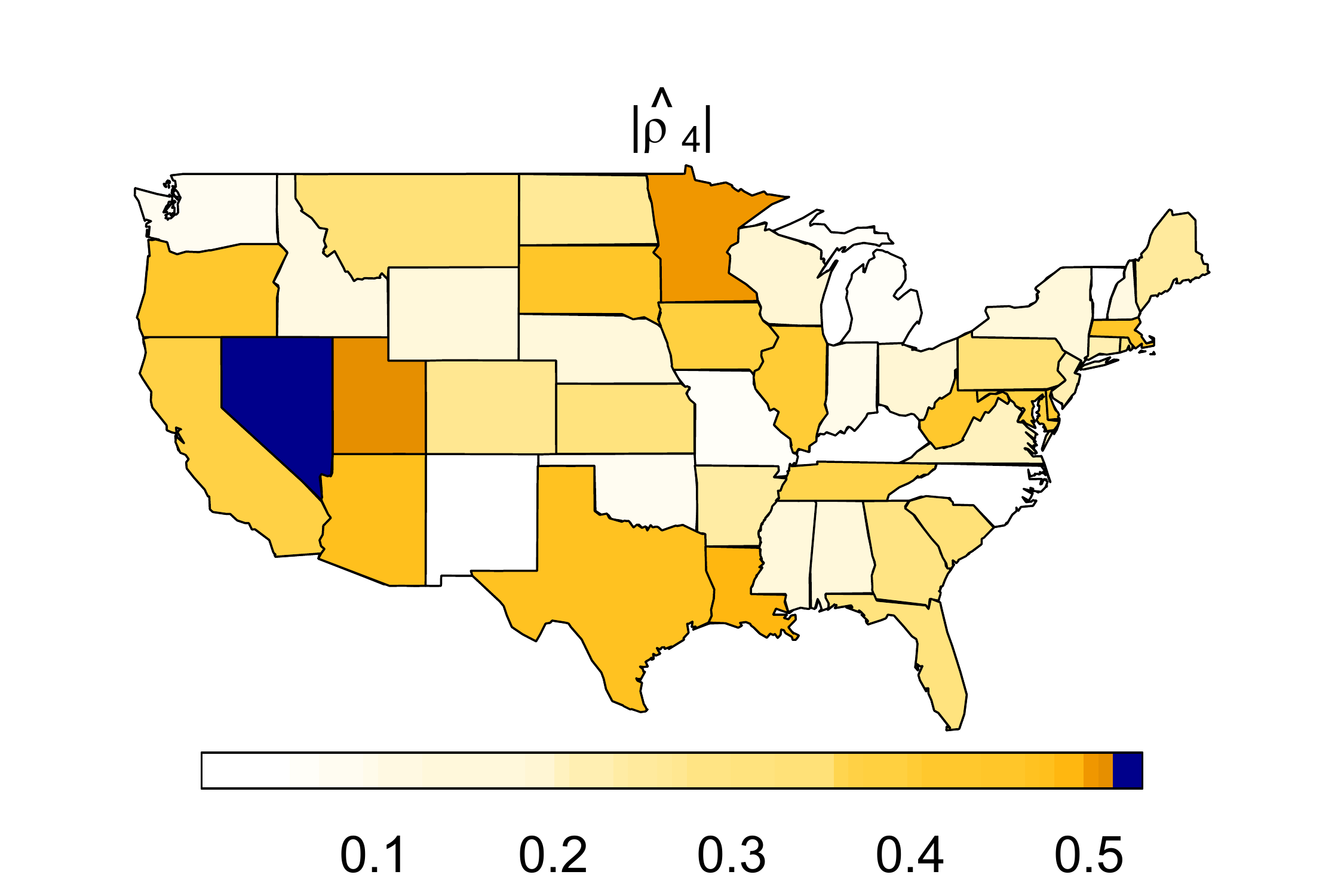}
                \caption{\label{fig:US_income_growth_map_absrho4} Absolute value of $\widehat{\rho}(4)$}
        \end{subfigure}%
        \caption[US States summary statistics.]{\label{fig:US_income_overview} Summary statistics of quarterly income growth rates (in \%) from $1982/1$ -- $2011/4$ with respect to US baseline $\widehat{\mu}(r_{US,t}) = 1.32\%$, $\widehat{\sigma}(r_{US,t}) = 0.92\% $ per quarter; $\widehat{\Omega}(r_{US,t}) = 4.86\% $, $\widehat{\rho}_1(r_{US,t}) = 0.42$, $\widehat{\rho}_4(r_{US,t}) = 0.13$.}
\end{figure*}

I consider quarterly per-capita income growth rates of the ``lower 48'' from $1982/1$ to $2011/4$ (last $30$ years)
\begin{equation*}
g_{j,t} = r_{j, t} - r_{US, t}, \quad j \in \lbrace \text{AL, $\ldots$, WY} \rbrace,
\end{equation*}
where $r_{j,t}$ is the annual growth rate of region $j$.\footnote{Publicly available at \url{www.bea.gov/itable}.}  Interested in finding similar state economies within the US, we subtract the US baseline. 
Clustering states with similar economic dynamics can help to decide where to provide support when facing difficult economic times. For example, if certain states do not show any important dynamics on a $7$-$8$ year scale -- also known as the ``business cycle'' \citep{Halletetal08_EuropeanBusinessCycle} -- then it might be better to support states that are affected by these global economy swings.

The first row of Fig.\ \ref{fig:US_income_overview} displays basic summary statistics: sample average, standard deviation, and first and fourth order autocorrelation.  The second row give statistics related to forecastability:  Fig.\ \ref{fig:US_income_growth_map_Omega} shows $\widehat{\Omega}$ based on the spectra in Fig.\ \ref{fig:US_income_growth_spectra}; Fig.\ \ref{fig:US_income_growth_map_absrho1} shows the absolute lag $1$ correlation (analogously for lag $4$ in Fig.\ \ref{fig:US_income_growth_map_absrho4}), since two $AR(1)$s with a $\pm \phi$ lag $1$ coefficient are equivalent in terms of forecasting (compare to SFA ranking in the portfolio example).  

The spectral densities of Nevada and  Nebraska illustrate the intuitive derivation of $\Omega(x_t)$ from Eq.\  \eqref{eq:motivation_process}: for Nebraska all frequencies are equally important and it is thus difficult to forecast any better than the sample mean; contrary, Nevada's income growth rates are mainly driven by a yearly cycle ($\omega_j \approx 0.25$) and low frequencies, thus Nevada is much easier to forecast.

A similar dataset (but annually and for different years) has been analyzed in \citet{Dhiral01_ClusteringARIMA}, who fit  $AR(1)$ models to the non-adjusted growth rates $r_{j,t}$ for $25$ pre-selected states, and then cluster them in the model space. Although they obtain interpretable results, it is unlikely that US state economies only differ in their lag $1$ coefficient. In particular, simple $AR(1)$ models cannot capture the business cycle, which is clearly visible in Fig.\ \ref{fig:US_income_growth_spectra} (even for the adjusted rates).

Similarly, as SFA maximizes lag $1$ correlation, it misses the quarterly cycle. ForeCA does not face this model selection bias, but can find forecastability across all frequencies. In particular, only ForeC $4$ detects interesting high frequency signals (Fig.\ \ref{fig:US_income_ForeCs}). The most forecastable PCs are PC $5$, $4$, and $1$; interestingly PC $3$ is least important for forecasting among all $48$ PCs.  Also note that ForeCs are more interpretable than SFs or PCs (Figs.\ \ref{fig:US_income_ForeCs} - \ref{fig:US_income_PCs}). Particularly, ForeC $1$ shows a clear $\approx 25$ year period (generation cycle), whereas PC $1$ looks somewhat arbitrary.  Yet, the associated loadings in Fig.\ \ref{fig:US_income_maps-pca-sfa-foreca} are quite similar.

\renewcommand{\figWidth}{0.48}
\begin{figure}[!t]
\centering
\begin{subfigure}[t]{\figWidth\textwidth}
        \centering
        \includegraphics[width=\textwidth]{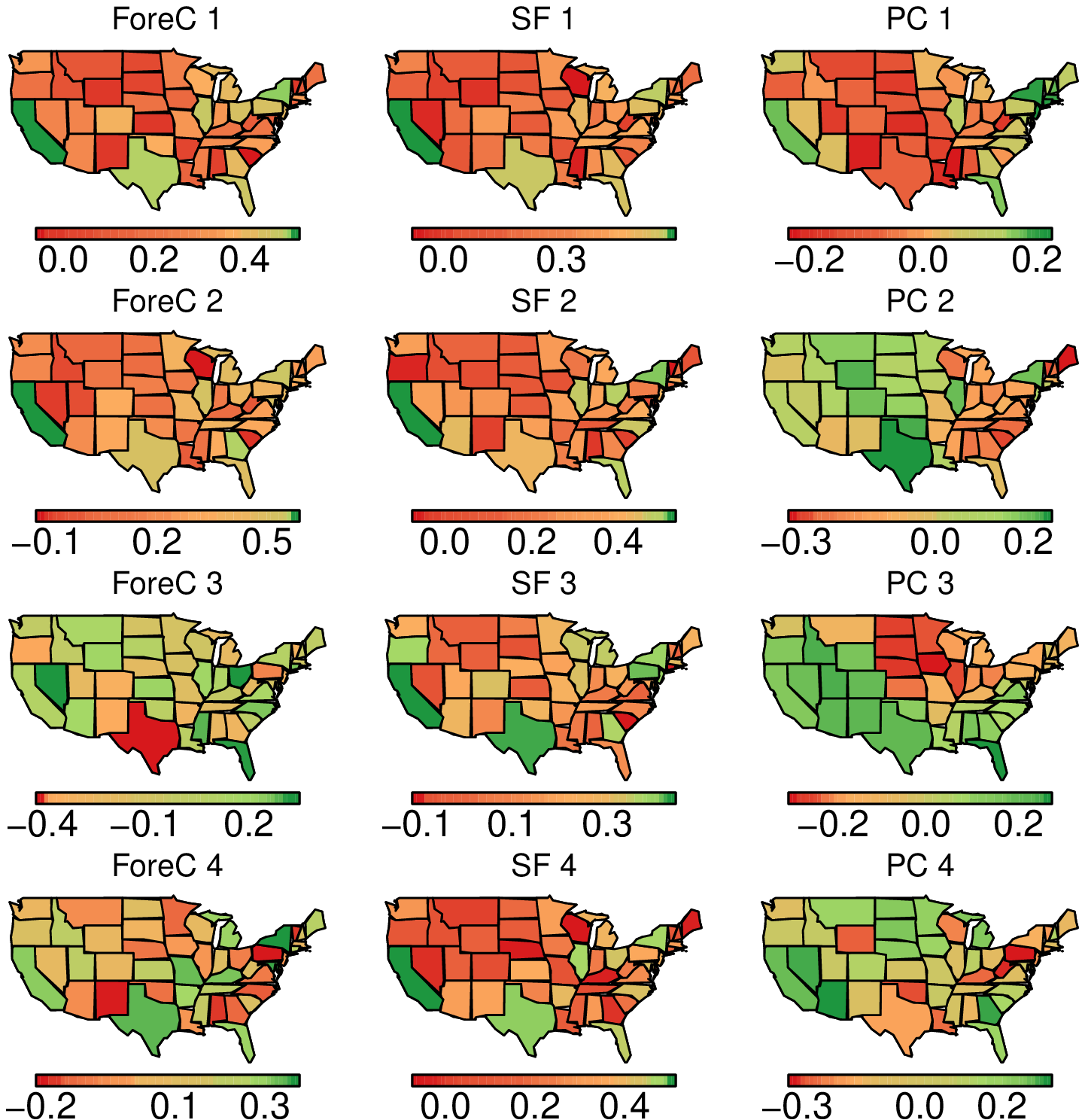}
        \caption{\label{fig:US_income_maps-pca-sfa-foreca} First $4$ loadings.}
\end{subfigure}%

\renewcommand{\figWidth}{0.15}
\begin{subfigure}[t]{\figWidth\textwidth}
        \centering
        \includegraphics[width=\textwidth]{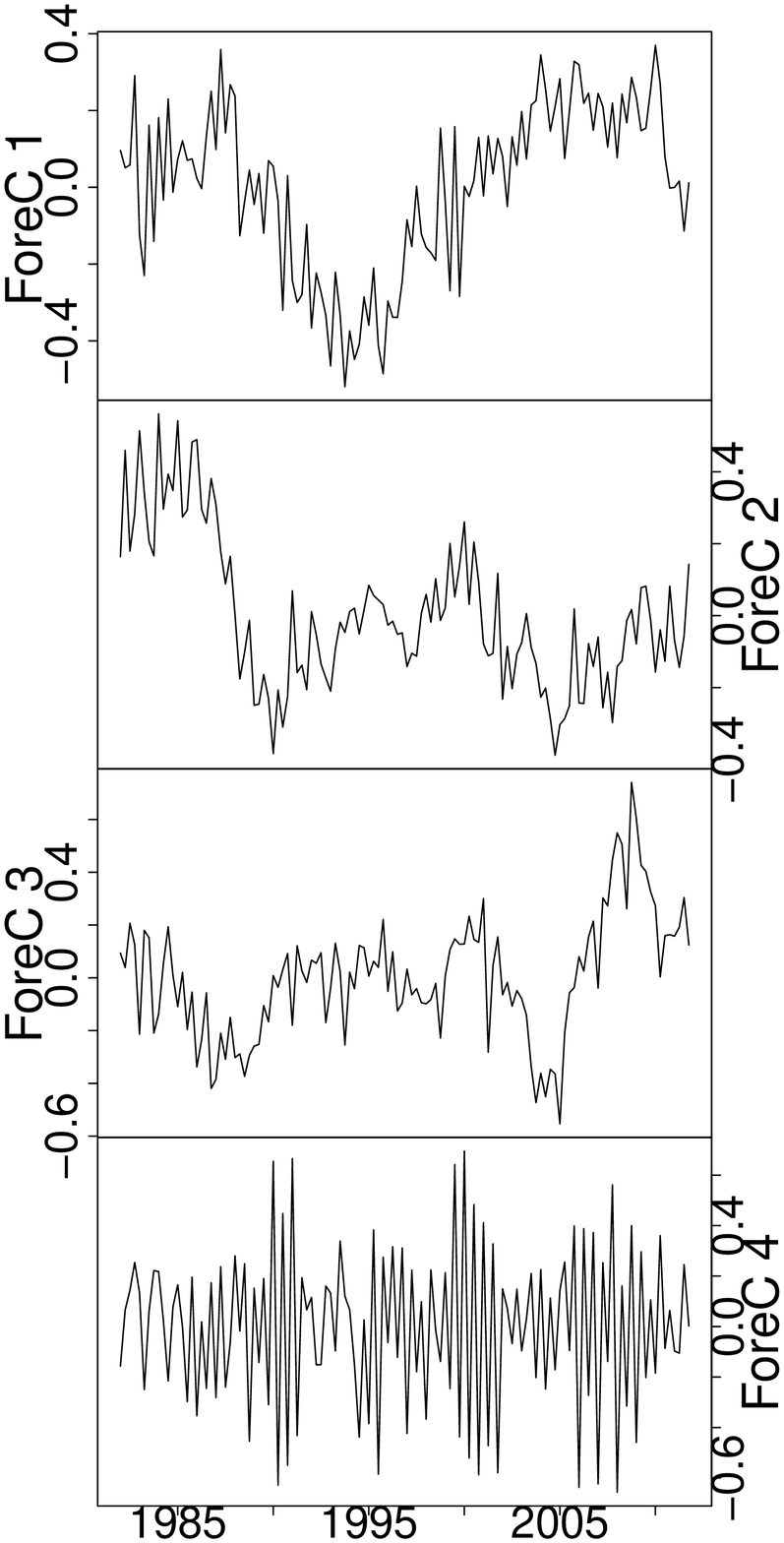}
        \caption{\label{fig:US_income_ForeCs} ForeCs}
\end{subfigure}%
\begin{subfigure}[t]{\figWidth\textwidth}
        \centering
        \includegraphics[width=\textwidth]{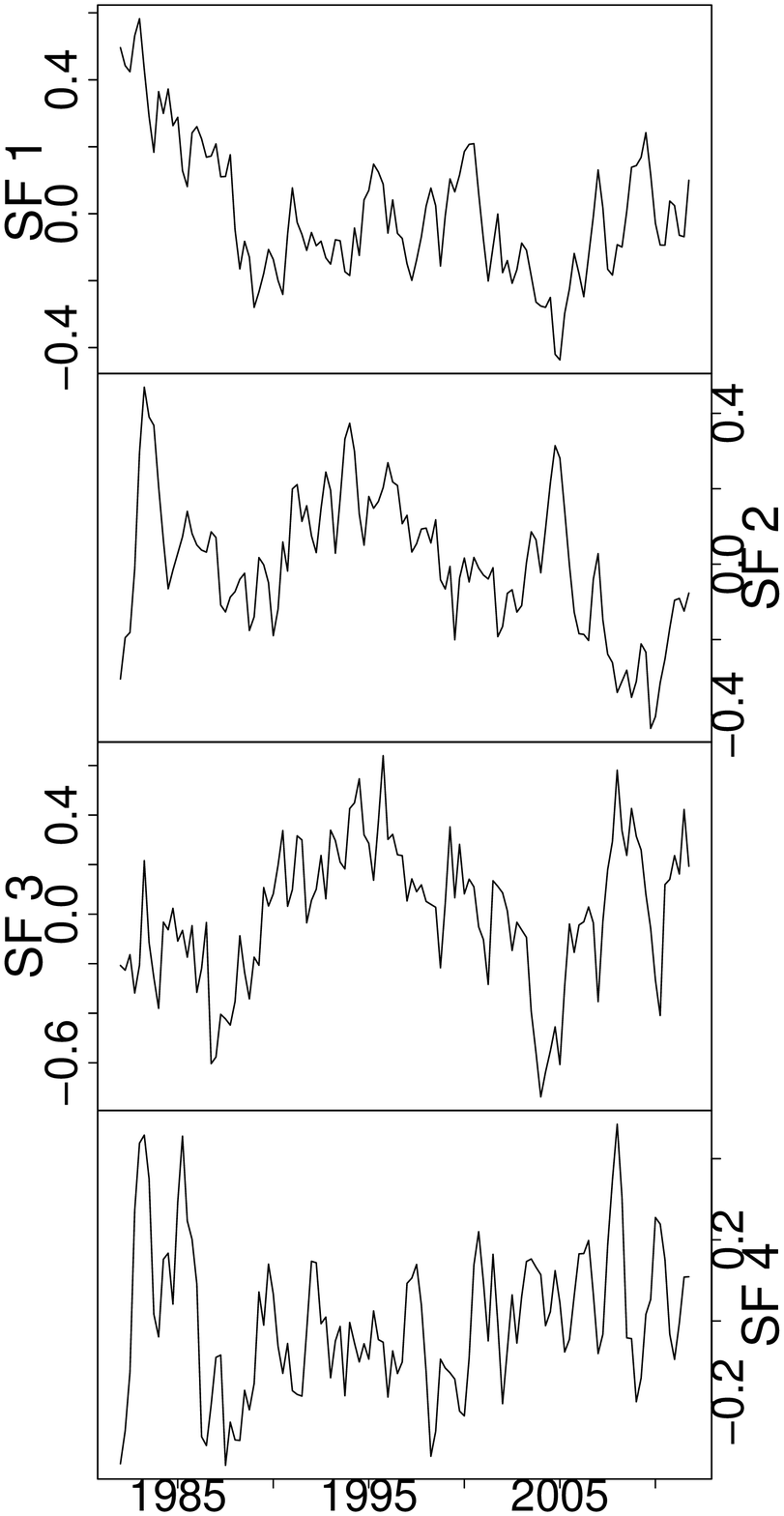}
        \caption{\label{fig:US_income_SFs} SFs}
\end{subfigure}
\begin{subfigure}[t]{\figWidth\textwidth}
        \centering
        \includegraphics[width=\textwidth]{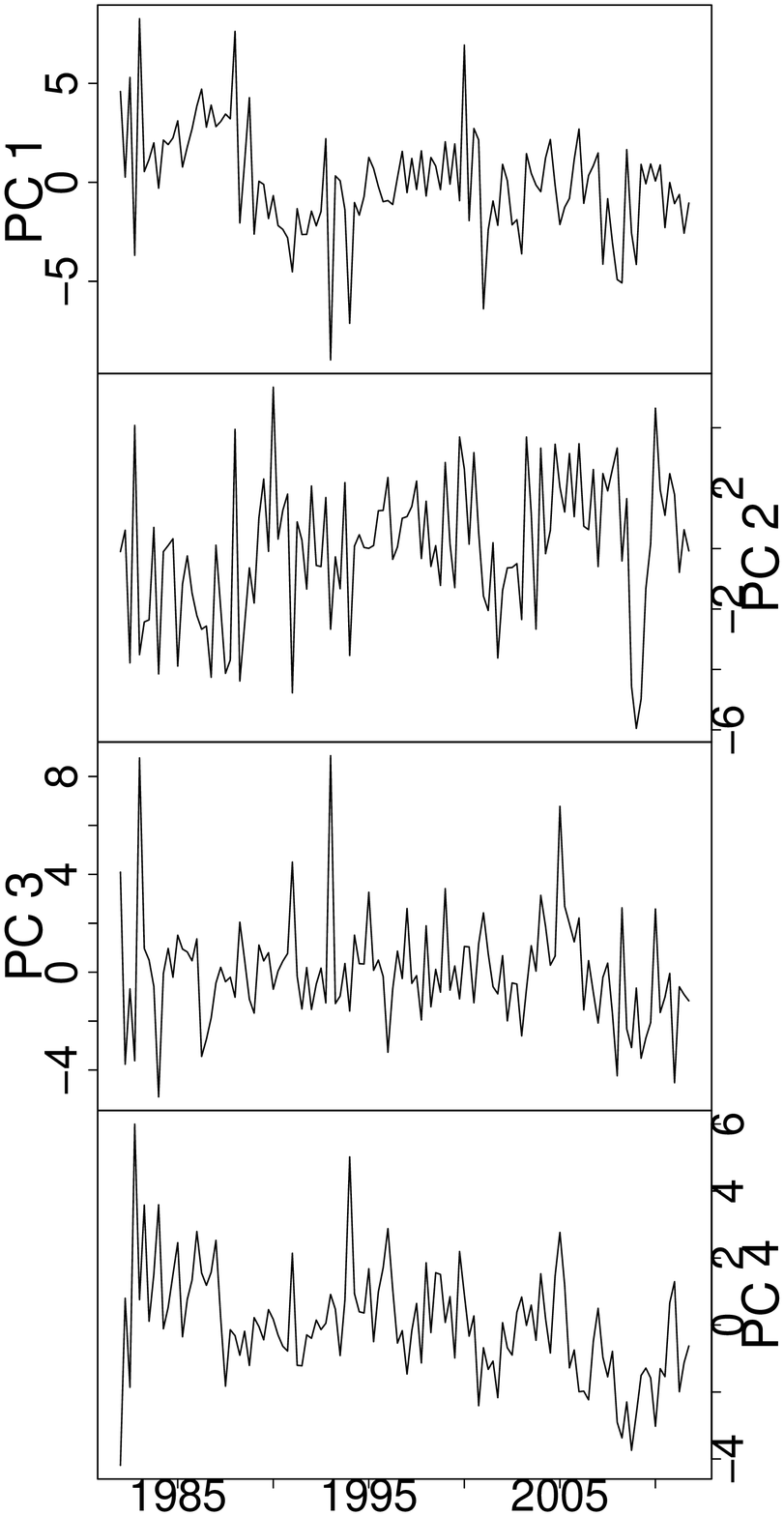}
        \caption{\label{fig:US_income_PCs} PCs}
\end{subfigure}

\renewcommand{\figWidth}{0.48}
\begin{subfigure}[t]{\figWidth\textwidth}
        \centering
        \includegraphics[width=\textwidth]{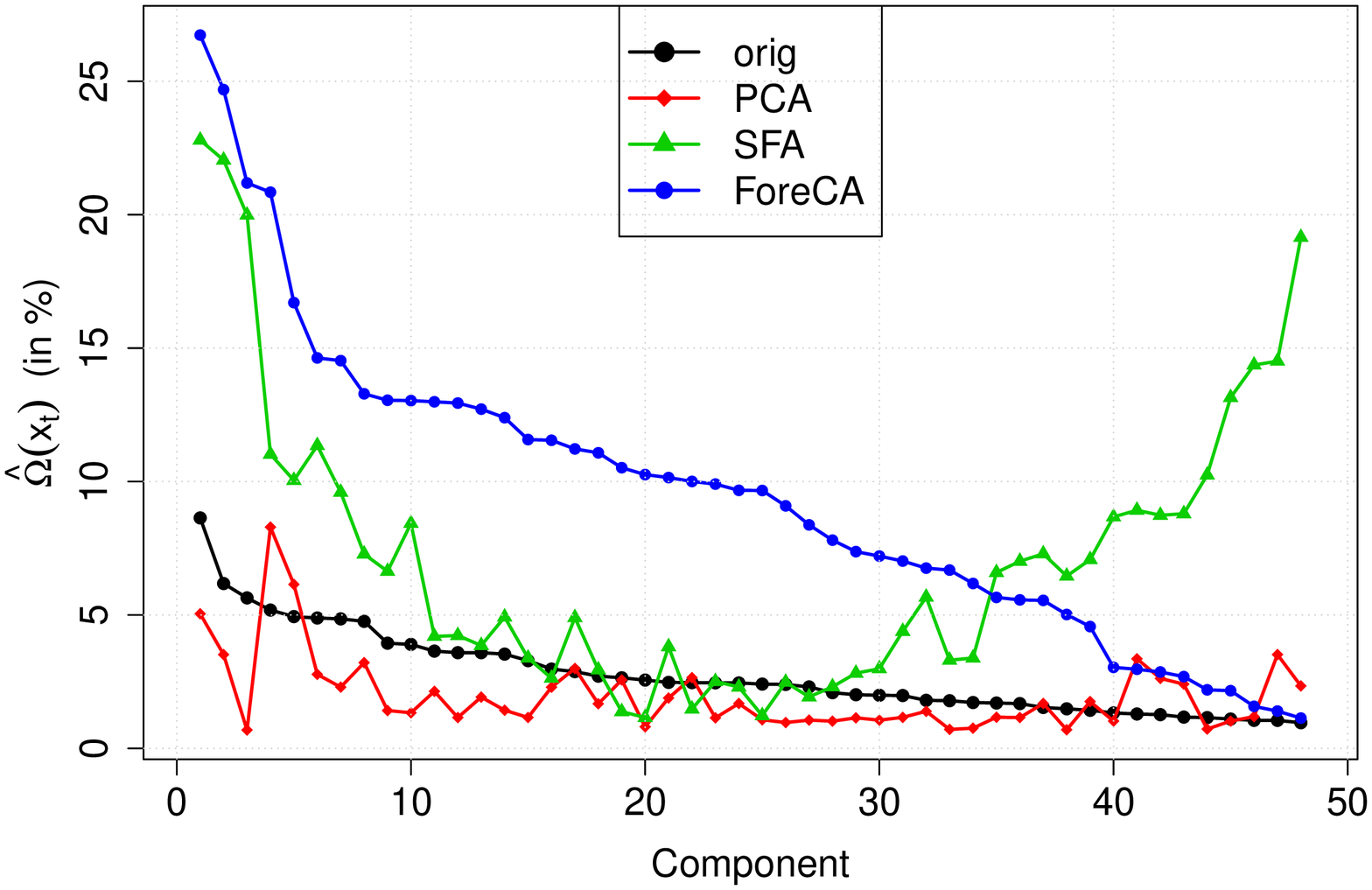}
        \caption{\label{fig:US_income_Omega_Orig_PCA_SFA_ForeCA} scree-plot of $\widehat{\Omega}(\cdot)$.}
\end{subfigure}%
\caption[US income ForeCA results.]{\label{fig:US_income_ForeCA_results} PCA, SFA, and ForeCA on US income data.}
\end{figure}

\section{Related Work}
\label{sec:related_work}
Using predictability to separate signals is not new. 

In the classic time series literature \citet{BoxTiao77_CCAtimeseries} introduced canonical analysis and measure predictive power by the residual variance of fitting vector auto-regression (VAR) models.  Recently \citet{MattesonTsay11_DOC_timeseries} propose another DR technique that blends PCA and ICA by separating signals to the extent of fourth moments (but not higher). 

\citet{Stone01_BSS_TemporalPredictability} use predictability as a contrast function for blind source separation (BSS).  While their approach is similar to ours, it relies on subjective measures of ``short'' and ``long'' term moving averages, which are then used to produce actual forecasts.  

Much work in BSS \citep{Gomezetal10_BSS_minentropy, LinAdali10_BSS_minentropy}, especially ICA, focuses on minimizing entropy rate. The \emph{entropy rate} $\mathcal{H}(y_t) = \lim_{t  \rightarrow \infty} H(y_t \mid y_{t-1}, y_{t-2}, \ldots)$ of a Gaussian process is related to the spectrum via \citep[][p.\ 417]{CoverThomas91_InformationTheory}
\begin{align}
\mathcal{H}(y_t) 
\label{eq:entropy_rate}
&= \frac{1}{2} \log 2 \pi e + \frac{1}{4 \pi} \int_{-\pi}^{\pi} \log S_y(\lambda) d \lambda.
\end{align} 
However, these approaches require VAR model fits and/or numerical optimization.

On the contrary, the ForeCA measure $\Omega(y_t)$ is based on information-theoretic uncertainty and is an inherent property of the stochastic process $y_t$. We believe that this makes $\Omega(y_t)$ a more principled measure of forecastability than model-dependent measures. Furthermore, it can be estimated quickly using data-driven, nonparametric techniques.

It is important to point out that \emph{spectral entropy}, i.e., differential entropy of \eqref{eq:spectral_entropy}, is neither equal nor proportional to the entropy rate in \eqref{eq:entropy_rate}.  For particular processes they coincide (e.g., for an $AR(1)$; \citet{Gibson94_spectralentropy_interpretation}), but in general they don't. They measure \emph{different} properties of the signal.  Thus ICA algorithms based on entropy rate minimization do not yield the same results as ForeCA.  In fact, the ForeCA measure can be used to rank ICs by decreasing forecastability.

\citet{Cardoso04_GaussianityCorrelationDependence} gives an excellent account of the intertwined relations between Gaussianity, autocorrelation, and dependence in multivariate time series and their effect on objective functions for BSS.  Exactly because of this tangle, we only consider frequency properties of the signal and not entropy rate -- since for forecasting the distribution itself is of minor importance compared to the temporal dependence.


\section{Discussion}
\label{sec:discussion}

I introduce Forecastable Component Analysis (ForeCA), a new dimension reduction technique for multivariate time series. Contrary to other popular methods -- such as PCA or ICA -- ForeCA takes temporal dependence into account and actively searches for the most forecastable subspace.  ForeCA minimizes the entropy of the spectral density: lower entropy implies a more forecastable signal. The optimization problem has an iterative, yet fast analytic solution, and provably leads to a (local) optimum. 

While SFA is a good approximation (maximizing lag $1$ correlation), real world signals often have more complex correlation structure. The here proposed ForeCA can automatically detect arbitrary autocorrelation structure using nonparametric estimators. Applications to financial and macro-economic data demonstrate that ForeCA is better than PCA and SFA at finding the most predictable signals, and can also be used for time series classifications.

\clearpage
\addcontentsline{toc}{subsection}{References}
\bibliographystyle{chicago}	
\bibliography{PhD_thesis}

\cleardoublepage

\appendix

\begin{center}
\textbf{ APPENDIX -- SUPPLEMENTARY MATERIAL }
\end{center}

\section{Proofs}

\begin{proof}[Proof of Properties \ref{prop:Omega}\ref{item:Omega_comb_less_max} (Max-subadditivity)]

Consider the linear combination of two stationary processes $x_t$ and $y_t$,
\begin{equation*}
z_t = \alpha x_t + \beta y_t
\end{equation*}

It holds $\V z_t = \alpha^2 \sigma_x^2 + \beta^2 \sigma_y^2 + 2 \alpha \beta cov(x_t, y_t)$. If $x_t$ and $y_s$ are uncorrelated for all $s \neq t$ then
$\V z_t = \alpha^2 + \beta^2$ (if $x_t$ and $y_t$ are both unit-variance processes). To have a unit-variance sum we need $\beta = \sqrt{1- \alpha^2}$:

The spectrum of $z_t$ equals (if they are uncorrelated)
\begin{equation}
\label{eq:sum_spectra_less_square}
S_z(\lambda) = \alpha^2 S_x(\lambda) + \beta^2 S_y(\lambda) \leq \alpha S_x(\lambda) + \beta S_y(\lambda)
\end{equation}

It therefore holds
\begin{align*}
\Omega(z_t) 
	&= \Omega(\alpha x_t + \beta y_t) \\
	& = 1 - H\left( \alpha^2 S_x(\lambda) + \beta^2 S_y(\lambda)  \right) \\
	& \leq 1 - \left( \alpha^2 H(S_x(\lambda)) + \beta^2  H(S_y(\lambda)) \right),
\end{align*}
where the last inequality follows by \eqref{eq:sum_spectra_less_square}.  Plugging in $\beta = \sqrt{1- \alpha^2}$ 
gives
\begin{align*}
& \alpha^2 - \alpha^2 H(S_x(\lambda)) + (1-\alpha^2) - (1-\alpha^2) H(S_y(\lambda)) \\
	& = \alpha^2 (1 - H(S_x(\lambda))) + (1-\alpha^2) (1 - H(S_y(\lambda))) \\
	& = \alpha^2 \Omega(x_t) + (1-\alpha^2) \Omega(y_t) \\
	& \leq \max \left( \Omega(x_t),  \Omega(y_t) \right),
\end{align*}
which completes the proof.

\end{proof}

\begin{proof}[Proof of Proposition \ref{prop:semi_def}]
For every $\mathbf{w}$,
\begin{align*}
\transpose{\mathbf{w}} \overline{\widehat{S}^{(i)}_U} \mathbf{w} & = - \frac{1}{T} \sum_{j=0}^{T-1} \transpose{\mathbf{w}} \left[ \widehat{S}_U(\omega_j) \cdot \ell(\mathbf{w}_i; \omega_j) \right] \mathbf{w}\\
&= \frac{1}{T} \sum_{j=0}^{T-1} \underbrace{ - \ell(\mathbf{w}_i; \omega_j)}_{\geq 0} \cdot \underbrace{\transpose{\mathbf{w}} \widehat{S}_U(\omega_j) \mathbf{w}}_{\geq 0} \geq 0.
\end{align*}
\end{proof}

\end{document}